\pdfoutput=1
\documentclass[11pt]{amsart}
\usepackage{xcolor}
\usepackage[margin=1in,includefoot]{geometry}
\usepackage[utf8]{inputenc}
\usepackage[T1]{fontenc}
\usepackage{lmodern,microtype}
\usepackage{mathtools,amssymb,amsthm}
\usepackage{tikz}
\usepackage[unicode]{hyperref}
\usepackage{bookmark}

\makeatletter
\def\@makefntext{\leftskip 7pt\parindent 0pt\leavevmode\llap{\@makefnmark\kern 3pt}}
\let\@adminfootnotes\relax
\makeatother

\newtheorem{theorem}{Theorem}
\newtheorem{lemma}[theorem]{Lemma}
\newtheorem{corollary}[theorem]{Corollary}
\newtheorem{claim}[theorem]{Claim}
\newtheorem{proposition}[theorem]{Proposition}

\def\famC{\mathcal{C}}
\def\famG{\mathcal{G}}
\def\famP{\mathcal{P}}
\def\famR{\mathcal{R}}
\def\famS{\mathcal{S}}
\def\famT{\mathcal{T}}
\def\famV{\mathcal{V}}

\def\famX{\mathcal{X}}

\def\MaxCliques{\mathfrak{M}}

\def\setN{\mathbb{N}}
\def\setR{\mathbb{R}}

\let\leq\leqslant
\let\geq\geqslant
\let\setminus\smallsetminus
\let\epsilon\varepsilon

%\linespread{1.04}
\linepenalty=200
\allowdisplaybreaks

\begin{document}

\pagestyle{firstpage}

\renewcommand*{\thefootnote}{\alph{footnote}}

\title{Coloring and Maximum Weight Independent Set of~Rectangles}
\author[Parinya Chalermsook]{Parinya Chalermsook\protect\footnotemark}
\footnotetext[1]{Aalto University, Finland (\texttt{parinya.chalermsook@aalto.fi}). Supported by European Research Council (ERC) under the European Union's Horizon 2020 research and innovation programme (grant agreement No.\ 759557) and by Academy of Finland Research Fellowship, under grant number 310415.}

\author[Bartosz Walczak]{Bartosz Walczak\protect\footnotemark}
\footnotetext[2]{Jagiellonian University, Kraków, Poland (\texttt{walczak@tcs.uj.edu.pl}). Partially supported by National Science Center of Poland grant 2015/17/D/ST1/00585.}

\begin{abstract}
In 1960, Asplund and Grünbaum proved that every intersection graph of axis-parallel rectangles in the plane admits an $O(\omega^2)$-coloring, where $\omega$ is the maximum size of a clique.
We present the first asymptotic improvement over this six-decade-old bound, proving that every such graph is $O(\omega\log\omega)$-colorable and presenting a polynomial-time algorithm that finds such a coloring.
This improvement leads to a polynomial-time $O(\log\log n)$-approximation algorithm for the maximum weight independent set problem in axis-parallel rectangles, which improves on the previous approximation ratio of $O(\frac{\log n}{\log\log n})$.

\end{abstract}

\maketitle

\renewcommand*{\thefootnote}{\arabic{footnote}}
\setcounter{footnote}{0}

\section{Introduction}

\subsection*{Coloring of Rectangles}

Let $\famR$ be a family of axis-parallel rectangles in the plane.
The chromatic number of $\famR$, denoted by $\chi(\famR)$, is the minimum number of colors that can be assigned to the rectangles so that any two intersecting rectangles receive different colors.
The clique number of $\famR$, denoted by $\omega(\famR)$, is the maximum size of a set $\famC\subseteq\famR$ such that any two rectangles in $\famC$ intersect.
These two terms are equivalent to the chromatic number $\chi(G)$ and the clique number $\omega(G)$ of the intersection graph $G$ of $\famR$.
Since $\chi(G)\geq\omega(G)$, a natural question is whether $\chi(G)$ can be bounded from above in terms of $\omega(G)$.
This question in various graph classes has received a lot of attention from discrete mathematics community, and it has also played crucial roles in the theory of algorithms and mathematical programming.

In general, it is well known that triangle-free graphs (that is, graphs with clique number $2$) can have arbitrarily large chromatic number~\cite{erdos1959graph}.
Classes of graphs $\famG$ that admit a function bounding $\chi(G)$ in terms of $\omega(G)$ for every $G\in\famG$ are called $\chi$-bounded.
There has been immense progress in the study of $\chi$-bounded classes of graphs in recent years---see the survey by Scott and Seymour~\cite{scott?survey} and the references therein.
In particular, various classes of geometric intersection graphs are known to be $\chi$-bounded or not $\chi$-bounded---see e.g.\ \cite{kostochka2004coloring,pawlik2014triangle,rok2019outerstring}.
The history of this question for rectangle intersection graphs dates back to 1948, when Bielecki~\cite{bielecki1948problem} asked whether triangle-free rectangle graphs have bounded chromatic number.
Asplund and Grünbaum~\cite{asplund1960coloring} not only answered this question in the positive but also showed a more general bound of $\chi(\famR)\leq 4\omega(\famR)^2-4\omega(\famR)$.
This bound was later improved by Hendler~\cite{hendler1998schranken} to $\chi(\famR)\leq 3\omega(\famR)^2-2\omega(\famR)-1$.
Kostochka~\cite{kostochka2004coloring} constructed rectangle families $\famR$ with $\chi(\famR)=3\omega(\famR)$, and this remains the best known lower bound.
Chalermsook~\cite{chalermsook2011coloring} proved the bound $\chi(\famR)=O(\omega(\famR)\log\omega(\famR))$ for the special case that $\famR$ contains no nested pair of rectangles.
Closing or even narrowing down the gap between the linear lower bound and the quadratic upper bound for the general families of rectangles has been a long-standing open problem.

\subsection*{Maximum Weight Independent Set of Rectangles (MWISR)}
In MWISR, we are given a family of $n$ axis-parallel rectangles in the plane together with weights assigned to them, and we aim at finding a maximum weight subfamily (called an independent set or a packing) that contains no two intersecting rectangles.
Besides being a fundamental problem in geometric optimization, MWISR is interesting from several perspectives.
First, it arises in various applications, including map labeling~\cite{agarwal1998label,doerschler1992rule}, resource allocation~\cite{lewin2002routing}, data mining~\cite{lent1997clustering,khanna1998approximating,fukuda1996data}, and unsplittable flow routing~\cite{bonsma2014constant}.
Second, it is one of the ``somewhat tractable'' special cases of the general maximum weight independent set problem: given an $n$-vertex graph with weights on the vertices, find a maximum weight subset of the vertices containing no two vertices connected by an edge.
This problem for general graphs is NP-hard to approximate to within a factor of $n^{1-\varepsilon}$ for every $\varepsilon>0$~\cite{hastad1996clique,zuckerman2007linear}, with the best known approximation factor being $O(n(\log\log n)^2/\log^3n)$~\cite{feige2004approximating}.
In special graph classes defined by intersections of geometric objects (such as disks, squares, and more generally---fat objects), polynomial-time approximation schemes (PTASes) are known~\cite{chan2003polynomial,erlebach2005polynomial}.
Rectangles are perhaps the simplest natural objects for which the maximum independent set problem is not known to admit a PTAS\@.
MWISR is NP-hard~\cite{fowler1981optimal} and there have been active attempts in the past decade from various groups of researchers on obtaining approximation algorithms.
The best known approximation factor is $O(\frac{\log n}{\log\log n})$ by Chan and Har-Peled~\cite{chan2012approximation}.
In the unweighted case, Chalermsook and Chuzhoy~\cite{chalermsook2009maximum} presented an $O(\log\log n)$-approximation.
Recently, a quasi-polynomial-time approximation scheme (QPTAS) was presented by Adamaszek, Har-Peled, and Wiese~\cite{adamaszek2019approximation,adamaszek2013approximation} (see also an improvement on the unweighted case by Chuzhoy and Ene~\cite{chuzhoy2016approximating}).
Obtaining a PTAS or even a polynomial-time constant-factor approximation for MWISR remains an elusive open problem.

\subsection*{Connections between Coloring and MWISR}

These two problems are related through the perspective of mathematical programming.
In particular, consider the clique-constrained independent set polytope of a graph $G$:
\[\mathsf{QSTAB}(G) = \Bigl\{x\in\setR^{V(G)}\colon x\geq 0\enspace\text{and}\enspace\sum_{v\in Q}x_v\leq 1\enspace\text{for every clique $Q$ in $G$}\}.\]
For a graph $G$ and a weight vector $w\in\setR^{V(G)}$, let $\mathsf{FRAC}(G,w)=\max\{w\cdot x\colon x\in\mathsf{QSTAB}(G)\}$ and $\mathsf{INT}(G,w)=\max\{w\cdot x\colon x\in\mathsf{QSTAB}(G)\cap\{0,1\}^{V(G)}\}$, the latter being the maximum weight of an independent set in $G$ with respect to the weights $w$.
Clearly, $\mathsf{INT}(G,w)\leq\mathsf{FRAC}(G,w)$.
The integrality ratio (or integrality gap) $\mathsf{gap}(G,w)$ is the ratio $\frac{\mathsf{FRAC}(G,w)}{\mathsf{INT}(G,w)}$.
Since a fractional solution $x \in \mathsf{QSTAB}(G)$ with value $w\cdot x\geq\mathsf{INT}(G,w)$ can be found efficiently\footnote{In general graphs, it can be computed via SDP, as a solution optimizing $w\cdot x$ over the Lovász theta body of $G$.}, rounding this LP solution is a natural algorithmic paradigm for approximating the maximum weight independent set problem, especially in restricted graph classes.

The integrality ratio of $\mathsf{QSTAB}$ has a strong connection to certain Ramsey-type bounds.
More formally, let $\mathcal{G}$ be any graph class that is closed under clique replacement operation\footnote{This holds for various natural graph classes such as perfect graphs and geometric intersection graphs.}.
When $w = 1$ (the unweighted case), proving the upper bound $\mathsf{gap}(G,w) \leq \gamma$ for all $G \in \mathcal{G}$, is equivalent to proving the upper bound $R(s,t) \leq \gamma s(t-1)$ on the Ramsey numbers\footnote{The Ramsey number $R(s,t)$ is the minimum integer $n$ such that every $n$-vertex graph contains a clique of size $s$ or an independent set of size $t$.} for all graphs in the same graph class $\mathcal{G}$.
When allowing an arbitrary weight function $w$, proving  $\mathsf{gap}(G,w) \leq \gamma$ for all $G \in \mathcal{G}$, is equivalent to upper bounding the ratio $\frac{\chi_f(G)}{\omega(G)} \leq \gamma$ for all $G \in \mathcal{G}$.
These connections are constructive~\cite{chalermsook2016note}.
Therefore, one way to design an efficient approximation algorithm for the maximum independent set problem in any graph class $\mathcal{G}$ is to prove an (algorithmic) upper bound on $\frac{\chi(G)}{\omega(G)}$ for graphs in the same graph class.

The polytope $\mathsf{QSTAB}(G)$ has played crucial roles from both algorithms and mathematical optimization perspectives; a notable example is its application to finding maximum cliques and independent sets in perfect graphs.
It is particularly appealing for rectangle intersection graphs $G$, which have only $O(n^2)$ maximal cliques.
For these graphs, an LP over $\mathsf{QSTAB}(G)$ can be explicitly written and solved by a near-linear-time algorithm~\cite{chekuri2020fast}.
Therefore, it is an interesting question on its own to pinpoint the value of $\mathsf{gap}(G,w)$ for rectangle graphs.

\subsection*{Our Contributions}

First, we present the following improvement on the $O(\omega^2)$ coloring bound of Asplund and Grünbaum~\cite{asplund1960coloring}.

\begin{theorem}
\label{thm:coloring}
Every family of axis-parallel rectangles in the plane with clique number\/ $\omega$ is\/ $O(\omega\log\omega)$-colorable, and an\/ $O(\omega\log\omega)$-coloring of it can be computed in polynomial time.
\end{theorem}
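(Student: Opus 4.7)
My plan is to prove Theorem~\ref{thm:coloring} by induction on $\omega$, reducing the general case to the non-nested one handled by Chalermsook~\cite{chalermsook2011coloring}. The target inductive step is a structural lemma producing, for any family $\famR$ with clique number $\omega$, a subfamily $H\subseteq\famR$ such that (i) $H$ intersects every maximum clique, so $\omega(\famR\setminus H)\leq\omega-1$, and (ii) $H$ admits a coloring using only $O(\log\omega)$ colors. Iterating the step $\omega$ times then yields $\chi(\famR)=O(\omega\log\omega)$.

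The first candidate for $H$ is the family $M$ of inclusion-maximal rectangles of $\famR$. A short argument establishes that every maximal clique $K$ already meets $M$: if $R$ is inclusion-maximal inside $K$ and were properly contained in some $R'\in\famR$, then $R'$ would intersect every element of $K$, contradicting the maximality of $K$. Consequently $\omega(\famR\setminus M)\leq\omega-1$. Moreover $M$ is an antichain in the containment order, hence pairwise non-nested, so Chalermsook's~\cite{chalermsook2011coloring} bound yields $\chi(M)=O(\omega(M)\log\omega(M))$. Unfortunately $\omega(M)$ can be as large as $\omega$, so using $M$ itself only gives $\chi(\famR)\leq O(\omega^2\log\omega)$, worse than Asplund--Grünbaum~\cite{asplund1960coloring}. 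The real task is to thin $M$ into an $H$ of much smaller clique number.

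The main obstacle, which I expect to be the crux of the proof, is constructing such a thin hitting set. My first attempt would be random sampling: include each maximal rectangle independently with probability $p=\Theta(\log\omega/\omega)$, giving $\omega(H)=O(\log\omega)$ with high probability by a Chernoff bound, while hitting every clique that has many maximal members. The delicate case is ``thin'' cliques with very few inclusion-maximal representatives, which the sample may miss. To cover these I would exploit the Helly property of rectangles: such a thin clique pierces a common point around which its members cluster as nested rectangles, a structure rigid enough that a sweep-line or divide-and-conquer augmentation can deterministically adjoin $O(\log\omega)$ further rectangles hitting all of them while keeping $\omega(H)=O(\log\omega)$. Once this is achieved, applying Chalermsook's non-nested bound or Asplund--Grünbaum to $H$ yields at worst $\chi(H)=O(\log^2\omega)$; a slightly refined induction (e.g.\ tracking $\chi(\famR)-c\omega\log\omega$ rather than $\chi(\famR)$, or reducing $\omega$ by more than one per step via a halving instead of a peeling argument) should then telescope this into the target $O(\omega\log\omega)$ bound. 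The algorithmic version requires that each of these steps---locating $M$, the sampling, and the Helly-based augmentation---be carried out in polynomial time, which for axis-parallel rectangles is standard once the structural decomposition is explicit.
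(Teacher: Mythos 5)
Your reduction ``peel a hitting set $H$ with $\chi(H)=O(\log\omega)$, repeat $\omega$ times'' would indeed give the theorem if such an $H$ could be produced, and your observation that the inclusion-maximal rectangles $M$ meet every maximal clique is correct. But the crux of your plan --- thinning $M$ to a set that still hits every maximum clique while having chromatic number $O(\log\omega)$ --- is exactly the hard part, and the two mechanisms you offer do not close it. The random sample misses precisely the cliques you call thin: a maximum clique may contain a \emph{single} rectangle of $M$ (all its other members being non-maximal in $\famR$), and a $p$-sample with $p=\Theta(\log\omega/\omega)$ misses that one rectangle with probability close to $1$; moreover the Chernoff step itself is shaky, since the deviation probability for one clique is only $\omega^{-\Theta(1)}$, which cannot be union-bounded against the up to $n^2$ maximum cliques when $\omega\ll n$. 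The proposed repair for thin cliques is asserted, not proved: members of a clique with one maximal representative need not ``cluster as nested rectangles'' --- they only share a point, and can cross or corner-intersect arbitrarily --- so the claim that a sweep-line adjoins $O(\log\omega)$ rectangles hitting all such cliques without blowing up $\omega(H)$ is an unsupported statement of essentially the full difficulty of the theorem. Finally, even granting everything, your accounting gives $\chi(H)=O(\log^2\omega)$ per round and hence $O(\omega\log^2\omega)$ overall; the remark about ``halving instead of peeling'' gestures at the right idea but supplies no construction.

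For contrast, the paper does not peel one clique level per round. It builds a divide-and-conquer hierarchy of depth $k=\lceil\log_2\omega\rceil$: each class $\famS_i(w)$ is split in two using witness points on top sides (an adaptation of Kierstead--Trotter), so that Lemma~\ref{lem:partition} and Lemma~\ref{lem:clique} control how many same-class rectangles can stack over a witness point or cross a given rectangle. Then, at level $i$, the rectangles admitting a $2^{k-i+2}$-covering form a $3$-sparse family (Lemma~\ref{lem:sparse-covering}), colorable with $O(2^{k-i})$ colors by Lemma~\ref{lem:sparse}, while removing them drops the clique number of each class to $O(2^{k-i})$; summing over the $2^i$ classes gives $O(\omega)$ colors per level and $O(\omega\log\omega)$ in total, with the leftover handled by Asplund--Gr\"unbaum at clique number $O(1)$. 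In other words, the paper spends $O(\omega)$ colors on each of $\log\omega$ levels (halving the clique number each time), which is the transposed and, crucially, \emph{realizable} version of your $\omega$ rounds of $O(\log\omega)$ colors; the hierarchical witness-point structure is what makes the sparseness (and hence the cheap coloring of the removed part) provable, and it is the ingredient your proposal is missing.
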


Second, via a simple reduction, we obtain the following result for MWISR\@.
We remark that the reduction was used implicitly in the paper of Chalermsook and Chuzhoy~\cite{chalermsook2009maximum}.

\begin{theorem}
\label{thm:ind set}
There is a polynomial-time\/ $O(\log\log n)$-approximation algorithm for MWISR, and the integrality ratio of the clique-constrained LP for rectangle graphs is at most\/ $O(\log\log n)$.
\end{theorem}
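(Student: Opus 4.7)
The plan is to round an optimal solution of the clique-constrained LP into an integral independent set, losing only a factor of $O(\log\log n)$ thanks to Theorem~\ref{thm:coloring}.

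First I would solve $\max\{w\cdot x:x\in\mathsf{QSTAB}(G)\}$ in polynomial time to obtain a fractional optimum $x^*$ of value $\mathsf{FRAC}(G,w)$; this is feasible because rectangle intersection graphs have only $O(n^2)$ maximal cliques and thus $\mathsf{QSTAB}(G)$ admits a polynomial-size explicit description. The central step is a randomized rounding: sample each rectangle $v$ independently with probability $q_v=\min(cx^*_v\log n,1)$ for a suitable constant~$c$, and let $T$ denote the resulting set. Since $\mathbb{E}[|Q\cap T|]\leq c\log n$ for every maximal clique $Q$ by feasibility of $x^*$, a Chernoff bound together with a union bound over the $O(n^2)$ maximal cliques gives $\omega(G[T])=O(\log n)$ with high probability. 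Theorem~\ref{thm:coloring} applied to $G[T]$ then yields an $O(\log n\log\log n)$-coloring, whose heaviest color class is an independent set $I$ satisfying $w(I)\geq w(T)/O(\log n\log\log n)$.

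To convert this into the claimed $O(\log\log n)$ ratio, the analysis splits $V=V_1\cup V_2$ with $V_1=\{v:x^*_v\leq 1/(c\log n)\}$ and $V_2=V\setminus V_1$. On $V_1$ the sampling probabilities are un-capped, so the expected contribution to $w(T)$ is exactly $c\log n\cdot\mathsf{FRAC}(G[V_1],w)$; dividing by the chromatic-number bound leaves a loss of only $O(\log\log n)$. If $V_1$ carries a constant fraction of $\mathsf{FRAC}(G,w)$ we are done. Otherwise $V_2$ dominates, and since the threshold forces $\omega(G[V_2])\leq c\log n$, one recursively applies the same rounding to $G[V_2]$ with the smaller target $\log\log n$, producing a shrinking sequence of subproblems whose LP masses and clique-number thresholds are tracked so that the overall loss stays $O(\log\log n)$. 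Returning the best independent set across the levels then certifies the theorem.

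The main obstacle is precisely this $V_2$ branch: a single direct invocation of Theorem~\ref{thm:coloring} on $G[V_2]$ only yields a loss of $O(\log n\log\log n)$, and absorbing the extra $\log n$ requires the recursive scheme and the case analysis described above, which is the reduction implicit in Chalermsook and Chuzhoy~\cite{chalermsook2009maximum}. The randomized procedure is derandomized in the standard way by the method of conditional expectations with pessimistic estimators on the maximal cliques. Since the algorithm outputs an independent set of weight $\Omega(\mathsf{FRAC}(G,w)/\log\log n)\geq\Omega(\mathsf{INT}(G,w)/\log\log n)$, this simultaneously establishes the approximation ratio and the integrality-gap bound for $\mathsf{QSTAB}(G)$.
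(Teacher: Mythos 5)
There is a genuine gap, and it sits exactly where you flag it: the $V_2$ branch. Your 0/1 sampling caps the probability at $1$, so rectangles with $x^*_v>1/(c\log n)$ contribute only their LP mass (not $c\log n$ times it) to $w(T)$, and dividing by the $O(\log n\log\log n)$ colors loses a factor $\log n$ too much. The recursive fix you sketch does not go through as stated: to rerun ``the same rounding'' on $G[V_2]$ with target clique number $O(\log\log n)$ you would need a Chernoff deviation bound with mean $O(\log\log n)$ to survive a union bound over the $\Theta(n^2)$ maximal cliques, and $\exp(-\Omega(\log\log n))=(\log n)^{-O(1)}$ is far too weak; moreover, even if the clique number could be reduced, the weights of capped vertices still do not scale, so the loss accounting never closes. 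Deferring to ``the reduction implicit in Chalermsook and Chuzhoy'' does not help either: that argument is for the unweighted case, and handling arbitrary weights is precisely what needs a new idea here.

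The paper's proof avoids recursion entirely by rounding to \emph{multiplicities} rather than to a subset. With $m=\lceil 9\ln n\rceil$, one rounds $m x^\star$ to an integer vector $(y_R)_{R\in\famR}\in\{0,\dots,m\}^\famR$ (deterministically, via conditional expectations, after treating separately the trivial case that a single rectangle has weight at least $w^\star/2$) so that $\sum_{R\in\famC}y_R\leq 2m$ for every $\famC\in\MaxCliques$ while $\sum_R w_Ry_R\geq mw^\star/2$. The multiset $\famR'$ with $y_R$ copies of each $R$ then has clique number $O(\log n)$ \emph{and} total weight boosted by the factor $m$; since duplicate copies intersect, every color class of the $O(m\log m)$-coloring from Theorem~\ref{thm:coloring} is a genuine independent set, and the heaviest one has weight at least $\frac{mw^\star/2}{O(m\log m)}=\Omega(w^\star/\log\log n)$. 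The cancellation of $m$ against the number of colors is the key step your sampling scheme cannot replicate for $V_2$, and it is also what directly yields the integrality-gap bound. Separately, note that your high-probability argument needs $w(T)$ to be large \emph{simultaneously} with the clique bound (a single very heavy rectangle defeats naive concentration); the paper handles this via the heavy-rectangle case distinction and a single potential function combining weight and clique violations, which is also what makes the derandomization by conditional expectations routine.
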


This result improves upon the $O(\frac{\log n}{\log\log n})$-approximation by Chan and Har-Peled~\cite{chan2012approximation} for MWISR.
It also substantially simplifies and derandomizes the known $O(\log\log n)$-approximation in the unweighted setting~\cite{chalermsook2009maximum}.
The bound on the integrality ratio combined with a fast LP solver from~\cite{chekuri2020fast} imply that an $O(\log\log n)$ estimate on the value of MWISR can be computed in $O(n^2\operatorname{polylog}n)$ time.

The main new technical ingredient of this paper is a ``hierarchical decomposition'' of a family of rectangles, inspired by the work of Kierstead and Trotter~\cite{kierstead1981extremal}.
In section~\ref{sec:warm-up}, we present a small ``warm-up'' result that highlights the main idea behind this decomposition.
In section~\ref{sec:coloring}, we define the decomposition and use it to prove Theorem~\ref{thm:coloring}.
We present the proof of Theorem~\ref{thm:ind set} in section~\ref{sec:ind set}.

\section{Preliminaries} 

\subsection*{Definitions}

A \emph{rectangle} is a closed set of the form $[a,b]\times[c,d]$ in the plane, where $a<b$ and $c<d$.
The \emph{width} and the \emph{height} of such a rectangle are the values $b-a$ and $d-c$, respectively.

Let $\famR$ be a family of rectangles.
The \emph{intersection graph} of $\famR$ has vertex set $\famR$ and edge set defined as follows: two rectangles $R,R'\in\famR$ are connected by an edge if they intersect, that is, if $R\cap R'\neq\emptyset$.
A subfamily $\famC$ of $\famR$ is a \emph{clique} if the intersection of all rectangles in $\famC$ is non-empty.
This is the same as to say that every pair of rectangles in $\famC$ intersects, so this notion of a clique corresponds to a clique in the intersection graph of $\famR$.
We say that a clique $\famC$ in $\famR$ contains a point $p$ if $p$ belongs to every rectangle in $\famC$.
We let $\omega(\famR)$ denote the \emph{clique number} of $\famR$, that is, the maximum size of a clique in $\famR$.
A subfamily $\famS$ of $\famR$ is an \emph{independent set} if the rectangles in $\famS$ are pairwise disjoint.
A \emph{coloring} of $\famR$ is an assignment of colors to the rectangles in $\famR$ such that the rectangles of any given color form an independent set.
These notions correspond to independent sets and colorings of the intersection graph of $\famR$.
We say that $\famR$ is \emph{$k$-colorable} if there is a coloring of $\famR$ using $k$ colors.
We let $\chi(\famR)$ denote the \emph{chromatic number} of $\famR$, that is, the minimum number $k$ such that $\famR$ is $k$-colorable.

Let $R$ and $R'$ be two rectangles in the plane that intersect ($R\cap R'\neq\emptyset$).
We distinguish several possible types of intersections; see Figure~\ref{fig:intersection}.
If $R$ contains at least one corner of $R'$ or vice versa, then we have a \emph{corner intersection} between $R$ and $R'$.
Otherwise, we have a \emph{crossing intersection} between $R$ and $R'$, and we say that $R$ and $R'$ \emph{cross}.
We have a \emph{containment intersection} when one rectangle contains the other (which is a particular case of a corner intersection).
We have a \emph{vertical intersection} if one rectangle intersects both the top and the bottom sides of the other.

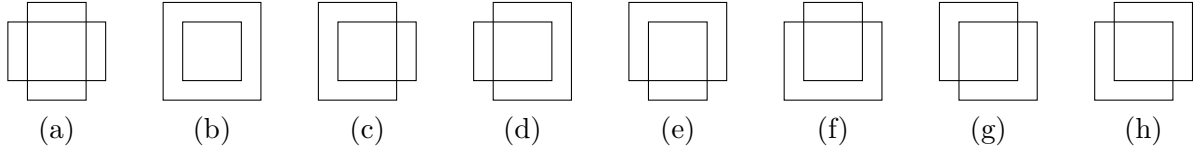
\begin{figure}[t]
    \centering
    \def\interskip{\hskip 0.75cm\relax}
\tikzstyle{every picture}=[scale=1.3]
\tikzstyle{every node}=[inner sep=6pt]
\begin{tikzpicture}
  \draw (0,0.8) rectangle (1,0.2);
  \draw (0.2,1) rectangle (0.8,0);
  \node[below] at (0.5,0) {(a)};
\end{tikzpicture}\interskip
\begin{tikzpicture}
  \draw (0,1) rectangle (1,0);
  \draw (0.2,0.8) rectangle (0.8,0.2);
  \node[below] at (0.5,0) {(b)};
\end{tikzpicture}\interskip
\begin{tikzpicture}
  \draw (0,1) rectangle (0.8,0);
  \draw (0.2,0.8) rectangle (1,0.2);
  \node[below] at (0.5,0) {(c)};
\end{tikzpicture}\interskip
\begin{tikzpicture}
  \draw (0,0.8) rectangle (0.8,0.2);
  \draw (0.2,1) rectangle (1,0);
  \node[below] at (0.5,0) {(d)};
\end{tikzpicture}\interskip
\begin{tikzpicture}
  \draw (0,1) rectangle (1,0.2);
  \draw (0.2,0.8) rectangle (0.8,0);
  \node[below] at (0.5,0) {(e)};
\end{tikzpicture}\interskip
\begin{tikzpicture}
  \draw (0,0.8) rectangle (1,0);
  \draw (0.2,1) rectangle (0.8,0.2);
  \node[below] at (0.5,0) {(f)};
\end{tikzpicture}\interskip
\begin{tikzpicture}
  \draw (0,1) rectangle (0.8,0.2);
  \draw (0.2,0.8) rectangle (1,0);
  \node[below] at (0.5,0) {(g)};
\end{tikzpicture}\interskip
\begin{tikzpicture}
  \draw (0,0.8) rectangle (0.8,0);
  \draw (0.2,1) rectangle (1,0.2);
  \node[below] at (0.5,0) {(h)};
\end{tikzpicture}
\vspace*{-0.5ex}
    \caption{All possible ways a pair of rectangles can intersect: (a) a crossing intersection, (b)--(h) corner intersections (each involving at least two corners), (b) a containment intersection, (a)--(d) vertical intersections.}
    \label{fig:intersection}
\end{figure}

Let us fix a family $\famR$ of $n$ rectangles that is the input to our problem.
For each rectangle $R\in\famR$, let $\famV(R)$ denote the rectangles in $\famR\setminus\{R\}$ that intersect both the bottom and the top sides of $R$, and let $\famX(R)$ denote the rectangles in $\famV(R)$ that cross $R$.
Thus, if $R,R'\in\famR$ and the height of $R'$ is greater than the height of $R$, then the following holds:
\begin{itemize}
\item there is a vertical intersection between $R$ and $R'$ if and only if $R'\in\famV(R)$;
\item there is a crossing intersection between $R$ and $R'$ if and only if $R'\in\famX(R)$.
\end{itemize}
It is important to observe that if $R'\in\famX(R)$, then $\famV(R')\subseteq\famV(R)$.

\subsection*{Preliminary Results}

A family of rectangles $\famR$ is \emph{$s$-sparse} if one can fix $s$ points $p^R_1,\ldots,p^R_s$ in each rectangle $R\in\famR$ so that the intersection $R\cap R'$ of any two crossing rectangles $R,R'\in\famR$ contains at least one of the points $p^R_1,\ldots,p^R_s,p^{R'}_1,\ldots,p^{R'}_s$.
For instance, a family of squares is $0$-sparse, because no pair of squares can cross.
The following lemma is slightly modified from~\cite{lewin2002routing,chalermsook2011coloring}.

\begin{lemma}
\label{lem:sparse}
For each\/ $s\in\setN$, every\/ $s$-sparse family of rectangles with clique number\/ $\omega$ is\/ $(2s+4)(\omega-1)$-colorable, and such a coloring can be computed in polynomial time.
\end{lemma}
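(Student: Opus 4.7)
The plan is to prove the bound via degeneracy: the intersection graph of any $s$-sparse family of rectangles should have degeneracy at most $(2s+4)(\omega-1)-1$, which immediately yields a $(2s+4)(\omega-1)$-coloring in polynomial time via the standard smallest-degree-last greedy procedure. Since every subfamily of an $s$-sparse family $\famR$ is itself $s$-sparse with the same designated points, it suffices to find, in every nonempty such $\famR$, one rectangle whose degree in the intersection graph is at most $(2s+4)(\omega-1)-1$.

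I would choose this rectangle $R^{\ast}\in\famR$ extremally, for instance as one of smallest height (ties broken by smallest width). The extremal choice ensures that no other rectangle of $\famR$ has its vertical range strictly inside $(c^{\ast},d^{\ast})$, so each neighbor $R'$ of $R^{\ast}$ satisfies $c'\leq c^{\ast}$ or $d'\geq d^{\ast}$. I would then split the neighbors of $R^{\ast}$ into those arising from corner intersections with $R^{\ast}$ and those arising from crossing intersections with $R^{\ast}$, and bound each part separately.

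For the corner intersections, each of the four corners of $R^{\ast}$ lies in at most $\omega-1$ other rectangles of $\famR$ by the clique bound at that point, so the neighbors that contain a corner of $R^{\ast}$ account for at most $4(\omega-1)$. The complementary subcase---a neighbor whose corner is contained in $R^{\ast}$ but which does not itself contain a corner of $R^{\ast}$---forces $R'$ to be strictly smaller than $R^{\ast}$ in at least one dimension and should be ruled out by the extremal choice (directly in the vertical direction, and via the secondary tiebreaker in the horizontal direction). For the crossing intersections, the $s$-sparsity provides a witness point in the intersection of each crossing pair: the $s$ designated points of $R^{\ast}$ account for at most $s(\omega-1)$ crossings by the clique bound at each such point, and a parallel counting argument exploiting the extremality of $R^{\ast}$ bounds the crossings witnessed by a designated point of the neighbor $R'$ by a further $s(\omega-1)$. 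Summing yields the desired bound $(2s+4)(\omega-1)$ on the degree of $R^{\ast}$.

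The main technical obstacle is the second crossing count, where the witness is a designated point of the neighbor $R'$: this is not an immediate clique bound and requires coupling the extremality of $R^{\ast}$ with the sparsity condition, which is the delicate piece carried over from~\cite{lewin2002routing,chalermsook2011coloring}. Once the degeneracy bound is established, the polynomial-time algorithm is routine: iteratively identify a minimum-degree rectangle, remove it, recurse, and greedily color in the reverse order of elimination with a color absent from the already-colored neighbors.
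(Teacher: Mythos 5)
Your high-level strategy (degeneracy at most $(2s+4)(\omega-1)-1$, then smallest-degree-last greedy) is the same as the paper's, but the way you certify the low-degree vertex does not work, and this is a genuine gap rather than a detail. Choosing $R^{\ast}$ of smallest height (ties by width) and bounding \emph{its} degree fails in both of the delicate places you flag. First, the corner subcase you want to rule out is not ruled out: a neighbor can have a corner inside $R^{\ast}$ while containing no corner of $R^{\ast}$ and yet be \emph{taller} (just narrower), poking into $R^{\ast}$ through its top or bottom side. Concretely, take $R^{\ast}=[0,N]\times[0,1]$ and the pairwise disjoint rectangles $[2j,2j+1]\times[\tfrac12,10]$ for $j=0,1,\dots$; this family has clique number $2$, is $0$-sparse (no crossings), $R^{\ast}$ is the unique shortest rectangle, and its degree is unbounded. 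Second, your ``parallel counting'' for crossings witnessed by a designated point of the neighbor $R'$ cannot give $s(\omega-1)$: each neighbor carries its \emph{own} $s$ points, so those crossings are bounded only by the number of neighbors. For example, a long horizontal bar crossed by many disjoint taller vertical slats, with one designated point of each slat placed inside the overlap, is $1$-sparse with $\omega=2$, and the shortest rectangle (the bar) has degree $n-1$. In fact no geometrically extremal choice of a single rectangle can work: for any such rule one can arrange $\omega=2$ families in which the selected rectangle has huge degree, so the existence of a low-degree vertex is inherently an averaging statement, not a local one.

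The paper's proof supplies exactly this averaging step, and it is the ingredient your plan is missing. One bounds the \emph{total} number of edges of the intersection graph by $(s+2)(\omega-1)\,|\famR|$ via a charging scheme: every crossing edge hands one token to a designated point lying in the intersection (each designated point lies in a clique, so it receives at most $\omega-1$ tokens), and every corner intersection hands half a token to each of two corners involved in it (each of the $4|\famR|$ corners receives at most $\tfrac{\omega-1}{2}$ tokens). Hence every induced subfamily, being again $s$-sparse, has average degree below $(2s+4)(\omega-1)$ and therefore \emph{some} vertex of degree at most $(2s+4)(\omega-1)-1$ --- but you cannot name it in advance by a height/width extremal rule. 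With that global count in place, your elimination-order greedy coloring and the polynomial-time claim go through exactly as you describe.
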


\begin{proof}
Let $\famR$ be an $s$-sparse family of rectangles with clique number $\omega$.
It suffices to show that the number of edges in the intersection graph of $\famR$ is strictly less than $(s+2)(\omega-1){|\famR|}$, because this implies that there is a vertex of degree less than $(2s+4)(\omega-1)$ in every induced subgraph, which leads to a $(2s+4)(\omega-1)$-coloring by straightforward induction.

For each edge $RR'$ in the intersection graph, if $R$ and $R'$ cross, we give one token to one of the points $p^R_1,\ldots,p^R_s,p^{R'}_1,\ldots,p^{R'}_s$ that lies in $R\cap R'$.
Otherwise, $RR'$ corresponds to a corner intersection, which involves at least two corners (four in case of containment and two otherwise; see Figure~\ref{fig:intersection}), and we give half of a token to any two corners involved in the intersection.
Clearly, the total number of tokens handed out is equal to the number of edges in the intersection graph.
Moreover, for each $R$, each point $p^R_i$ receives at most $\omega-1$ tokens, and each corner of a rectangle receives at most $\frac{\omega-1}{2}$ tokens (with some corners receiving strictly fewer tokens).
Therefore, the number of edges is less than $(s+2)(\omega-1){|\famR|}$.
\end{proof}

\begin{corollary}
\label{cor:corner coloring}
Every family of rectangles with no crossing intersections and with clique number\/ $\omega$ is\/ $4(\omega-1)$-colorable, and such a coloring can be computed in polynomial time.
\end{corollary}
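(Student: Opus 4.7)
The plan is to obtain this as an immediate instantiation of Lemma~\ref{lem:sparse} with the sparsity parameter $s=0$. The definition of $s$-sparse requires that we designate $s$ points inside each rectangle so that every crossing pair contains one of the designated points in its common intersection. When the family $\famR$ has no crossing intersections at all, this condition is vacuously satisfied: the set of crossing pairs is empty, so there is nothing to witness, and we may take $s=0$ without specifying any points.

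Thus $\famR$ is $0$-sparse, and Lemma~\ref{lem:sparse} applied with $s=0$ yields a $(2\cdot 0 + 4)(\omega-1) = 4(\omega-1)$-coloring. The polynomial-time claim follows because Lemma~\ref{lem:sparse} itself provides a polynomial-time algorithm (based on iteratively removing a vertex of small degree in the intersection graph).

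There is no real obstacle here; the only thing to verify is that the $s=0$ case of the lemma's token-counting argument does go through cleanly, which it does since when there are no crossings every edge corresponds to a corner intersection and the whole $(s+2)(\omega-1)|\famR|$ bound collapses to $2(\omega-1)|\famR|$, giving average degree less than $4(\omega-1)$ and hence the stated coloring by the standard greedy-on-minimum-degree-vertex induction.
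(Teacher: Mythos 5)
Your proposal is correct and matches the paper's intended argument: the corollary is exactly Lemma~\ref{lem:sparse} applied with $s=0$, since a family with no crossing intersections is vacuously $0$-sparse (the paper makes the same observation for squares), giving the $4(\omega-1)$ bound and the polynomial-time algorithm directly.
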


The following result of Asplund and Grünbaum~\cite{asplund1960coloring} will be used as a subroutine.

\begin{lemma}
\label{lem:omega square}
Every family of rectangles with clique number\/ $\omega$ is\/ $4\omega(\omega-1)$-colorable, and such coloring can be computed in polynomial time.
\end{lemma}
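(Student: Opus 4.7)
Plan: The proof is by induction on $\omega$. The base case $\omega=1$ is immediate, since the rectangles are then pairwise disjoint and $4\omega(\omega-1)=0$. For $\omega\geq 2$, my strategy is to extract a subfamily $\famR_0\subseteq\famR$ that (a) contains no pair of crossing rectangles and (b) meets every maximum clique of $\famR$. By Corollary~\ref{cor:corner coloring}, condition (a) gives $\famR_0$ a polynomial-time $4(\omega-1)$-coloring; condition (b) ensures $\omega(\famR\setminus\famR_0)\leq\omega-1$, so the inductive hypothesis colors $\famR\setminus\famR_0$ with a disjoint palette of $4(\omega-1)(\omega-2)$ additional colors. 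Together these use $4(\omega-1)+4(\omega-1)(\omega-2)=4(\omega-1)^2\leq 4\omega(\omega-1)$ colors, which closes the induction.

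For $\famR_0$, the natural candidate is a tallest-first greedy: process the rectangles of $\famR$ in order of non-increasing height, and admit $R$ to $\famR_0$ precisely when doing so creates no crossing intersection with an already admitted member. This runs in polynomial time, and property (a) holds automatically by construction.

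The main obstacle is verifying property (b). The plan is to exploit the preliminary observation that $R'\in\famX(R)$ implies $\famV(R')\subseteq\famV(R)$. Given a maximum clique $C$ with common point $p$ and tallest member $R^*\in C$, if $R^*$ were rejected by the greedy, then some taller $R^{**}\in\famR_0$ would cross $R^*$; hence $R^*\in\famX(R^{**})$ and $\famV(R^*)\subseteq\famV(R^{**})$. The inclusion propagates along the chain of rejecting predecessors and should yield, using the Helly property of axis-parallel rectangles noted in the preliminaries (pairwise intersection implies a common point), either a clique of size $\omega+1$ (contradicting maximality) or an $\famR_0$-member of $C$ obtained by swapping the topmost ancestor into $C$ while preserving the common point.

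Should this combinatorial bookkeeping prove too brittle, a reliable fallback is the original Asplund--Grünbaum scheme: label each rectangle $R$ with a pair of interval-graph colors obtained from interval colorings on the horizontal scan-lines at the $y$-coordinates of its top and bottom edges, where in each scan-line at most $\omega$ rectangles are simultaneously active; then check directly that this product palette of size at most $4\omega(\omega-1)$ assigns distinct labels to every intersecting pair, using a case split on crossing versus corner intersections.
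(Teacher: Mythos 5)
Your induction step rests on the hitting property (b): that the tallest-first, crossing-free greedy family $\famR_0$ meets every maximum clique (equivalently, that $\omega(\famR\setminus\famR_0)\leq\omega-1$). This property is false, and the sketched ``swap'' argument cannot be repaired, because the tall rectangle that causes a rejection need not contain the clique point. Concrete counterexample: take $A=[0,10]\times[0,2]$, $B=[8,18]\times[1,3]$, $T=[2,3]\times[-5,5]$, $T'=[12,13]\times[-5,5]$. Here $T$ crosses $A$, $T'$ crosses $B$, $A$ and $B$ meet in a corner intersection, and all other pairs are disjoint, so $\omega=2$ and $\{A,B\}$ is a maximum clique. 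The greedy admits $T$ and $T'$ (tallest, disjoint) and then rejects both $A$ and $B$, so $\famR_0=\{T,T'\}$ misses the maximum clique $\{A,B\}$ and $\omega(\famR\setminus\famR_0)=2=\omega$: the induction makes no progress. (A smaller slip in the same passage: if the taller $R^{**}$ crosses $R^*$, then $R^{**}\in\famX(R^*)$ and $\famV(R^{**})\subseteq\famV(R^*)$, not the reverse.) The fallback you mention is also not a proof as stated: the interval colorings on the scan-lines through the top and bottom edges of different rectangles are chosen independently, so two intersecting rectangles read their color pairs off \emph{different} scan-lines and nothing forces the pairs to differ.

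The fix is to abandon the ``peel one layer per unit of $\omega$'' scheme and instead peel by height in a partial order, which is what the paper does: define $R<R'$ iff $R'\in\famX(R)$. Every chain in this poset consists of pairwise crossing rectangles with nested horizontal projections and reverse-nested vertical projections, hence is a clique, so the poset has height at most $\omega$. A Mirsky-type decomposition (computable in polynomial time) therefore partitions $\famR$ into $\omega$ antichains $\famR_1,\ldots,\famR_\omega$, each free of crossing intersections, and Corollary~\ref{cor:corner coloring} colors each antichain with $4(\omega-1)$ colors, for $4\omega(\omega-1)$ in total. Note that this route needs no clique-hitting property at all: the antichains need not reduce the clique number, only be crossing-free, which is exactly the hypothesis Corollary~\ref{cor:corner coloring} requires. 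Your overall plan (crossing-free classes, each handled by Corollary~\ref{cor:corner coloring}) is the right shape, but the mechanism producing the classes must be the poset height, not the clique-hitting greedy.
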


\begin{proof}
Let $\famR$ be a family of rectangles with clique number $\omega$.
Let $<$ be the strict partial order on $\famR$ defined so that $R<R'$ if and only if $R'\in\famX(R)$.
Every chain in the poset $(\famR,{<})$ is a clique in $\famR$, so the height of $(\famR,{<})$ is at most $\omega$.
Therefore, $\famR$ can be partitioned into $\omega$ antichains in the poset $(\famR,{<})$, that is, families $\famR_1,\ldots,\famR_\omega$ with no crossing intersections.
By Corollary~\ref{cor:corner coloring}, each of the families $\famR_1,\ldots,\famR_\omega$ is $4(\omega-1)$-colorable, so the entire family is $4\omega(\omega-1)$-colorable.
\end{proof}

\section{Warm-Up}
\label{sec:warm-up}

In this section, we present two simple coloring results that capture some of the key ideas behind our general $O(\omega\log\omega)$ bound.

\begin{proposition}
\label{prop:warmup}
Let\/ $\famR$ be a family of rectangles.
\begin{enumerate}
\item If there are only crossing and containment intersections within\/ $\famR$, then\/ $\chi(\famR)=\omega(\famR)$.
\item If there are only vertical intersections within\/ $\famR$, then\/ $\chi(\famR)\leq 3\omega(\famR)-2$.
\end{enumerate}
\end{proposition}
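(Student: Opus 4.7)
My plan for Part 1 is to proceed by induction on $\omega(\famR)$. The condition ``only crossings and containments'' is equivalent to the statement that every intersecting pair of rectangles has both x-projections and y-projections nested; in particular, every clique of $\famR$ has its x-projections forming a chain under containment. For the inductive step, it suffices to produce an independent set $I \subseteq \famR$ meeting every maximum clique, since then $\omega(\famR \setminus I) \leq \omega - 1$ and the inductive coloring of $\famR \setminus I$ together with one extra color for $I$ yields an $\omega$-coloring of $\famR$. The natural candidate for $I$ is the set of rectangles that are narrowest in x-projection within some maximum clique, with a suitable tie-breaking rule based on y-projection.

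The main obstacle for Part 1 is showing that $I$ is independent. If $R_1, R_2 \in I$ intersect with $R_1$ narrower in x-projection, one would like to contradict maximality of $R_2$'s witnessing clique $\famC$ by inserting $R_1$ into $\famC$ and producing a clique of size $\omega + 1$. However, nested x-projection alone does not force $R_1$ to intersect every member of $\famC$---the y-projections might mismatch---so the argument has to combine Helly's property for rectangles (every clique has a common point, and in fact a common-intersection rectangle) with the tie-breaking on y-projections to propagate $R_1$'s intersection from $R_2$ to all of $\famC$. An alternative route would be to show the intersection graph is perfect by recognizing it as the intersection of the two comparability graphs associated with x- and y-projection containment, but perfectness of such intersections is not automatic and would itself require a structural proof.

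For Part 2, ``only vertical intersections'' is equivalent to every intersecting pair having nested y-projections. My plan is to process $\famR$ in decreasing order of y-projection length and color greedily, establishing that at each step the intersecting predecessors of the current rectangle $R$ use at most $3(\omega - 1)$ distinct colors, which gives $\chi(\famR) \leq 3\omega - 2$. The key structural observation is that any predecessor $R'$ (with y-projection strictly containing $R$'s) belongs to exactly one of three classes according to whether $R'$'s y-projection extends beyond $R$'s only above, only below, or on both sides. Within each of the first two classes the predecessors' y-projections share one endpoint with $R$'s and are therefore pairwise nested, so two predecessors in the same class intersect iff their x-projections intersect---an interval-graph structure whose clique number is at most $\omega - 1$ (extending it by $R$ would give a clique in $\famR$) and whose chromatic number therefore also equals $\omega - 1$. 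A two-dimensional ordering argument on the $(c_{R'}, d_{R'})$-coordinates handles the third class analogously. The main obstacle is arranging the greedy so that the class-wise chromatic bounds translate into class-wise bounds on the distinct colors actually used; this can be enforced by assigning each of the three classes a disjoint palette of $\omega - 1$ colors.
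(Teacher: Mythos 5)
Your candidate independent set in part (1) does not work. Take $R_2=[0,10]\times[0,10]$, $R_1=[4,6]\times[1,2]$ (so $R_1\subseteq R_2$), and $S=[-1,11]\times[4,5]$, which crosses $R_2$ and is disjoint from $R_1$. All intersections here are containments or crossings and $\omega=2$. The maximum cliques are $\{R_1,R_2\}$ and $\{R_2,S\}$; the unique narrowest member of the first is $R_1$ and of the second is $R_2$, so your set $I$ contains the intersecting pair $R_1,R_2$, and since the widths are all distinct, no tie-breaking on $y$-projections can change this. This is exactly the obstacle you flagged: when $R_1\subseteq R_2$, the members of the clique witnessing $R_2$ (here $S$) need not meet $R_1$ at all, so the Helly point of that clique does not propagate the intersection, and the maximality contradiction you hoped for simply is not there. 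So the step you left open is not just unfinished; the construction it was meant to justify is false, and part (1) needs a different mechanism for producing the color classes.

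In part (2), the three classes are defined relative to the rectangle $R$ currently being processed, so ``a disjoint palette of $\omega-1$ colors per class'' is not a well-defined coloring rule: an already-colored rectangle $R'$ may extend beyond one later rectangle only above and beyond another later rectangle on both sides, so $R'$ cannot be committed to a palette at the time it receives its color. Moreover, the claim that the already-colored neighbors of $R$ occupy at most $3\omega-3$ distinct colors is the entire content of the bound and is not established; for the plain greedy (first-fit) rule it is in fact false, because processing by decreasing height turns this into on-line interval coloring (the correspondence discussed right after the proposition), where first-fit is known to need considerably more than $3\omega-2$ colors. The paper resolves both parts with a single intrinsic level assignment in the style of Kierstead and Trotter: each rectangle $R$ is placed in the level $\famS_i$ with $i$ maximal such that $\famV(R)$ contains an $i$-witnessing clique. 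One then shows that no level contains a crossing or containment pair (which already yields part (1), since under the hypothesis of (1) every level is then independent), that $\famS_1$ is independent, and that within each level at most one other member contains the left side and at most one contains the right side of any given rectangle, so each level with $i\geq 2$ is $3$-colorable, giving $3\omega-2$. If you want to salvage your outline, you need an assignment of rectangles to levels or palettes that depends only on the rectangle and its taller neighbors, not on the later rectangle being colored.
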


\begin{proof}
Let $\famR$ be a family of rectangles with only vertical intersections, and let $\omega=\omega(\famR)$.
We process the rectangles in $\famR$ in the decreasing order of their heights and put each rectangle into one of the sets $\famS_1,\ldots,\famS_\omega$ as follows.
We put a rectangle $R\in\famR$ into $\famS_i$ where $i$ is the maximum integer such that there is an \emph{$i$-witnessing clique} for $R$, that is, a clique in $\famV(R)$ containing at least one rectangle from each of the sets $\famS_1,\ldots,\famS_{i-1}$.
This is well defined---when a rectangle $R$ is being processed, the rectangles in $\famV(R)$ have been already processed and distributed to $\famS_1,\ldots,\famS_\omega$, and $i$ is always at most $\omega$ because the aforesaid clique in $\famV(R)$ together with $R$ forms a clique in $\famR$ of size at most $\omega$.
Let $\famC(R)$ denote any $i$-witnessing clique for a rectangle $R\in\famR$ where $i$ is such that $R\in\famS_i$.

First, we show that, for each $i$, there can be no crossing or containment intersection between rectangles from a single set $\famS_i$.
Suppose for the sake of contradiction that two rectangles $R,R'\in\famS_i$ form a crossing or containment intersection, where $R'$ is taller than $R$, that is, $R'\in\famV(R)$.
If $R'$ contains $R$, then $\famC(R)\cup\{R'\}$ is an $(i+1)$-witnessing clique for $R$, contradicting the assumption that $R\in\famS_i$.
If $R$ and $R'$ cross, then $\famV(R')\cup\{R'\}\subseteq\famV(R)$, and $\famC(R')\cup\{R'\}$ is an $(i+1)$-witnessing clique for $R$, again contradicting the assumption that $R\in\famS_i$.
See Figure~\ref{fig:warm-up1}.

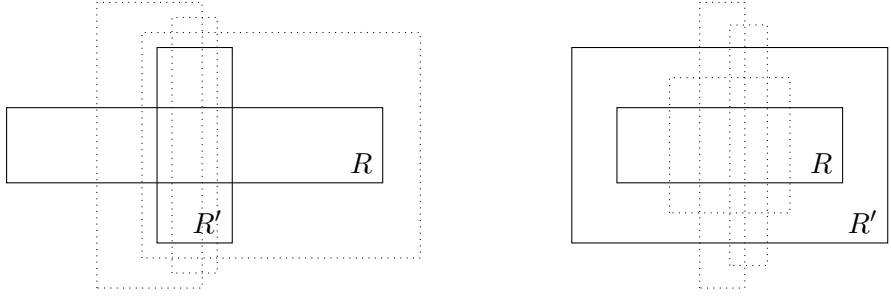
\begin{figure}[t]
    \centering
    \begin{tikzpicture}
  \draw (-2,0) rectangle (3,1);
  \draw (0,-0.8) rectangle (1,1.8);
  \draw[dotted] (-0.2,-1) rectangle (3.5,2);
  \draw[dotted] (0.2,-1.2) rectangle (0.8,2.2);
  \draw[dotted] (-0.8,-1.4) rectangle (0.6,2.4);
  \node[above left] at (3,0) {$R$};
  \node[above left] at (1,-0.8) {$R'$};
\end{tikzpicture}\hskip 2cm
\begin{tikzpicture}
  \draw (-0.6,-0.8) rectangle (3.6,1.8);
  \draw (0,0) rectangle (3,1);
  \draw[dotted] (0.7,-0.4) rectangle (2.3,1.4);
  \draw[dotted] (1.5,-1.1) rectangle (2,2.1);
  \draw[dotted] (1.1,-1.4) rectangle (1.7,2.4);
  \node[above left] at (3,0) {$R$};
  \node[above left] at (3.6,-0.8) {$R'$};
\end{tikzpicture}
    \caption{An illustration of the proof that there is no crossing or containment intersection in each set $\famS_i$.
    On the left, the dotted rectangles are the rectangles in $\famC(R')$; on the right, the dotted rectangles are those in $\famC(R)$.}
    \label{fig:warm-up1}
\end{figure}

If there are only containment and crossing intersections between the rectangles in $\famR$, then $\famS_1,\ldots,\famS_\omega$ are independent sets.
This completes the proof of statement~(1).

To prove statement~(2), we argue below that $\famS_1$ is still an independent set and that $\chi(\famS_i)\leq 3$ for $i=2,\ldots,\omega$, which together imply that $\chi(\famR)\leq 3\omega-2$.
%If there are other vertical intersections between the rectangles in $\famR$, it is still possible that such an intersection occurs between two rectangles $R,R'\in\famS_i$, where $R'\in\famV(R)$.
For $\famS_1$, if two rectangles $R,R'\in\famS_1$ intersect, where $R'\in\famV(R)$, then $\{R'\}$ is a $1$-witnessing clique for $R$, which contradicts the assumption that $R\in\famS_1$.
Therefore, $\famS_1$ is an independent set.

Now, fix $i\in\{2,\ldots,\omega\}$.
We argue that $\chi(\famS_i)\leq 3$.
Let $R\in\famS_i$.
We claim that at most one other rectangle in $\famS_i$ contains the right side of $R$.
Suppose for the sake of contradiction that two rectangles $R',R''\in\famS_i$ contain the right side of $R$, where the right side of $R'$ lies between the right sides of $R$ and $R''$.
Let $Q$ be the intersection of the rectangles in $\famC(R')$.
If $Q\cap R\neq\emptyset$, then $\famC(R')\cup\{R'\}\subseteq\famV(R)$; this implies that $\famC(R')\cup\{R'\}$ is an $(i+1)$-witnessing clique for $R$, contradicting the assumption that $R\in\famS_i$.
Thus assume $Q\cap R=\emptyset$.
This implies that $Q$ lies between the right sides of $R$ and $R''$, so $Q\cap R''\neq\emptyset$.
Now, there are two cases.
If $R''\in\famV(R')$, then $\famC(R')\cup\{R''\}$ is an $(i+1)$-witnessing clique for $R'$, contradicting the assumption that $R'\in\famS_i$.
If $R'\in\famV(R'')$, then $\famC(R')\cup\{R'\}$ is an $(i+1)$-witnessing clique for $R''$, contradicting the assumption that $R''\in\famS_i$.
See Figure~\ref{fig:warm-up2} for an illustration.
We have thus proved that at most one rectangle in $\famS_i$ other than $R$ contains the right side of $R$.
By symmetry, at most one rectangle in $\famS_i$ other than $R$ contains the left side of $R$.

\begin{figure}[t]
    \centering
    \begin{tikzpicture}
  \fill[lightgray] (1.5,-0.6) rectangle (1.8,1.4);
  \draw (0,0) rectangle (2.5,1);
  \draw (1,-0.6) rectangle (3,1.4);
  \draw (2.1,-1.2) rectangle (3.4,1.8);
  \node[above right] at (0,0) {$R$};
  \node[above left] at (3,-0.6) {$R'$};
  \node[above left] at (3.4,-1.2) {$R''$};
\end{tikzpicture}\hskip 1.5cm
\begin{tikzpicture}
  \fill[lightgray] (2.3,-0.6) rectangle (2.6,1.4);
  \draw (0,0) rectangle (2,1);
  \draw (1,-0.6) rectangle (3,1.4);
  \draw (1.7,-1.2) rectangle (3.4,1.8);
  \node[above right] at (0,0) {$R$};
  \node[above right] at (1,-0.6) {$R'$};
  \node[above left] at (3.4,-1.2) {$R''$};
\end{tikzpicture}\hskip 1.5cm
\begin{tikzpicture}
  \fill[lightgray] (2.4,-1.2) rectangle (2.7,1.8);
  \draw (0,0) rectangle (2,1);
  \draw (1,-1.2) rectangle (3,1.8);
  \draw (1.7,-0.8) rectangle (3.4,1.55);
  \node[above right] at (0,0) {$R$};
  \node[above right] at (1,-1.2) {$R'$};
  \node[above right] at (1.7,-0.8) {$R''$};
\end{tikzpicture}
    \caption{An illustration for the three cases in the proof that at most one rectangle in $\famS_i$ contains the right side of $R$.
    The gray area is the set $Q=\bigcap\famC(R')$.}
    \label{fig:warm-up2}
\end{figure}
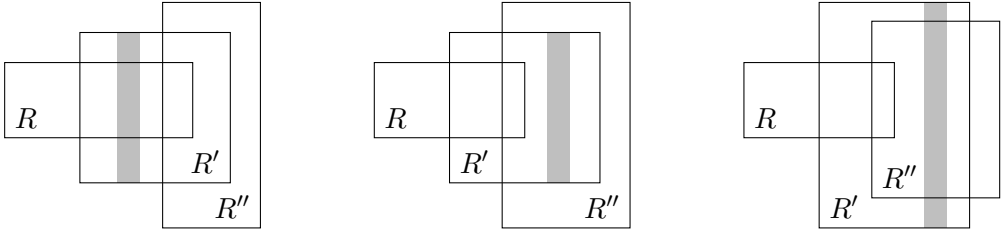

Now, we present a $3$-coloring algorithm for each set $\famS_i$.
We process the rectangles in $\famS_i$ in the decreasing order of their heights and color them greedily.
When a rectangle $R\in\famS_i$ is being processed, at most two other rectangles in $\famS_i$ have been assigned colors (one intersecting the left side and one intersecting the right side of $R$), so the greedy coloring uses at most three colors.
\end{proof}

Proposition~\ref{prop:warmup}~(1) is a strengthening of the observation by Asplund and Grünbaum \cite{asplund1960coloring} (used in the proof of Lemma~\ref{lem:omega square}) that families $\famR$ with only crossing intersections satisfy $\chi(\famR)=\omega(\famR)$.

Proposition~\ref{prop:warmup}~(2) is closely related to the result of Kierstead and Trotter \cite{kierstead1981extremal} that families of intervals on the real line with clique number $\omega$ can be colored \emph{on-line} using at most $3\omega-2$ colors.
Specifically, by a correspondence described in \cite{krawczyk2017online}, for every deterministic strategy of the adversary in the on-line coloring problem for intervals, there is an `equivalent' family of axis-parallel rectangles (with only vertical intersections) whose chromatic number is equal to the number of colors forced by that strategy against any on-line coloring algorithm.
The decomposition of the family $\famR$ into sets $\famS_1,\ldots,\famS_\omega$ used in the proof above is an adaptation of a decomposition used by Kierstead and Trotter in their proof of the upper bound of $3\omega-2$ for the on-line problem.
Kierstead and Trotter \cite{kierstead1981extremal} also showed a deterministic strategy of the adversary forcing any on-line coloring algorithm to use at least $3\omega-2$ using on a family of intervals with clique number $\omega$.
That strategy gives rise to a family of axis-parallel rectangles $\famR$ with only vertical intersections and with $\chi(\famR)=3\omega(\famR)-2$ \cite[Proposition 3.1]{krawczyk2017online}, which shows that the bound in Proposition~\ref{prop:warmup}~(2) is sharp.

\section{An \texorpdfstring{$O(\omega\log\omega)$}{O(ω log ω)}-Coloring Algorithm}
\label{sec:coloring}

Let $\famR$ be a family of rectangles and let $k=\lceil\log_2\omega(\famR)\rceil$.
Thus $\omega(\famR)\leq 2^k$.
We show how to construct a coloring of $\famR$ using $O(2^k\cdot k)$ colors (in polynomial time), which yields Theorem~\ref{thm:coloring}.

The argument consists of two steps.
In the first step, we construct a ``hierarchical decomposition'' of $\famR$ similar to the decomposition into sets $\famS_1,\ldots,\famS_\omega$ used in the proof of Proposition~\ref{prop:warmup}, but defined with a ``divide and conquer'' approach rather a simple linear induction.
This modification is essential to make it work with the second step---a ``clique reduction'' argument, which is an adaptation of an argument used before in \cite{chalermsook2011coloring,chalermsook2009maximum}.

\subsection*{Step 1: Hierarchical Decomposition}

For $i\in\setN$, let $B_i$ denote the set of binary words of length $i$, and let $\epsilon$ denote the empty binary word, so that $B_0=\{\epsilon\}$ and $B_i=\{0,1\}^i$ for $i\geq 1$.
For each $i=0,\ldots,k$, by induction, we construct a partition $\{\famS_i(w)\colon w\in B_i\}$ of $\famR$ and a non-empty set $\famP_i(R)$ of \emph{witness points} for every rectangle $R\in\famR$.

For $i=0$, let $\famS_0(\epsilon)=\famR$, and for every rectangle $R\in\famR$, let $\famP_0(R)$ be the set of intersection points of the top side of $R$ with the left and right sides of the rectangles in $\famV(R)\cup\{R\}$ (which includes the two top corners of $R$).
See Figure~\ref{fig:witness} for an illustration.

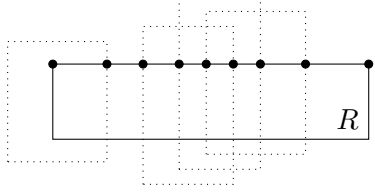
\begin{figure}[t]
    \centering
    \begin{tikzpicture}[xscale=1.2]
  \draw (0,0) rectangle (3.5,1);
  \draw[dotted] (-0.5,-0.3) rectangle (0.6,1.3);
  \draw[dotted] (1,-0.6) rectangle (2,1.5);
  \draw[dotted] (1.4,-0.4) rectangle (2.3,1.9);
  \draw[dotted] (1.7,-0.2) rectangle (2.8,1.7);
  \node[above left] at (3.5,0) {$R$};
  \tikzstyle{every node}=[circle,draw,fill,minimum size=3pt,inner sep=0pt]
  \foreach\x in {0,0.6,1,1.4,1.7,2,2.3,2.8,3.5} \node at (\x,1) {};
\end{tikzpicture}
    \caption{An illustration of the initial set $\famP_0(R)$ of witness points.}
    %The fourth point from the left corresponds to a clique of size $3$ in $\famV(R)$.}
    \label{fig:witness}
\end{figure}

Now, let $i\in\{1,\ldots,k\}$, and let $u\in B_{i-1}$.
We partition the set $\famS_{i-1}(u)$ into two subsets $\famS_i(u0)$ and $\famS_i(u1)$, and we define the witness sets $\famP_i(R)$ for the rectangles $R\in\famS_i(u0)\cup\famS_i(u1)$, as follows.
We consider the rectangles $R\in\famS_{i-1}(u)$ in the order decreasing by height, so that all rectangles in $\famV(R)\cap\famS_{i-1}(u)$ are considered before $R$.
For each rectangle $R\in\famS_{i-1}(u)$ (in that order), if some witness point $p\in\famP_{i-1}(R)$ belongs to at least $2^{k-i}$ rectangles from $\famV(R)$ that have been already added to $\famS_i(w0)$, then we add $R$ to $\famS_i(w1)$ and let $\famP_i(R)$ be the set of all such witness points $p$; otherwise, we add $R$ to $\famS_i(w0)$ and let $\famP_i(R)=\famP_{i-1}(R)$.

\begin{proposition}
\label{prop:basic decomp}
We remark the following basic properties of this (hierarchical) decomposition.
\begin{itemize}
\item For each\/ $i=0,\ldots, k$, the family $\{\famS_i(w)\colon w\in B_i\}$ forms a partition of\/ $\famR$.
\item For each\/ $i=0,\ldots,k-1$ and each\/ $w\in B_i$, we have\/ $\famS_i(w)=\famS_{i+1}(w0)\cup\famS_{i+1}(w1)$.
\item For each rectangle\/ $R\in\famR$, we have\/ $\famP_0(R)\supseteq\cdots\supseteq\famP_k(R)$.
Moreover, if\/ $R\in\famS_{i+1}(w0)$, then\/ $\famP_i(R)=\famP_{i+1}(R)$.
\end{itemize}
\end{proposition}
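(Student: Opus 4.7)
The plan is to verify each of the three items directly from the recursive definition of the decomposition, with a single straightforward induction on $i$ for the first item.

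First I would prove item~(2), which is essentially immediate from the construction: when the level-$i$ decomposition is built from the level-$(i{-}1)$ decomposition at the node $u\in B_{i-1}$, every rectangle $R\in\famS_{i-1}(u)$ is processed exactly once (in decreasing order of height) and inserted into precisely one of $\famS_i(u0)$ and $\famS_i(u1)$, depending on whether the condition ``some $p\in\famP_{i-1}(R)$ lies in at least $2^{k-i}$ rectangles of $\famV(R)$ already placed in $\famS_i(u0)$'' is satisfied. Hence $\famS_{i-1}(u)=\famS_i(u0)\sqcup\famS_i(u1)$.

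Item~(1) then follows by induction on $i$. The base case $i=0$ is immediate since $\famS_0(\epsilon)=\famR$ is the only set at that level. For the inductive step, assuming $\{\famS_{i-1}(u):u\in B_{i-1}\}$ partitions $\famR$, I split each block $\famS_{i-1}(u)$ by item~(2), and since the resulting pieces are indexed by the distinct words of $B_i$, their union is again a partition of $\famR$.

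Item~(3) I would read off the definition of $\famP_i(R)$. When the algorithm adds $R$ to $\famS_{i+1}(w0)$, it sets $\famP_{i+1}(R)=\famP_i(R)$, yielding the ``moreover'' part. When it adds $R$ to $\famS_{i+1}(w1)$, it sets $\famP_{i+1}(R)$ to be the subset of those $p\in\famP_i(R)$ satisfying the witness-counting condition, so $\famP_{i+1}(R)\subseteq\famP_i(R)$ in either case. Iterating the inclusion from $i=0$ up to $i=k$ gives the chain $\famP_0(R)\supseteq\cdots\supseteq\famP_k(R)$.

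There is essentially no obstacle here, since the proposition is a bookkeeping statement: it records that the recursive construction really does produce a refining sequence of partitions together with a nested sequence of witness sets. The only thing to be mildly careful about is processing rectangles in order of decreasing height so that, at the moment $R$ is considered, every rectangle in $\famV(R)\cap\famS_{i-1}(u)$ has already been assigned to $\famS_i(u0)$ or $\famS_i(u1)$, which makes the defining condition well-posed; I would state this observation explicitly before invoking it in the proof of~(2).
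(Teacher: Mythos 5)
Your proposal is correct and matches the paper's treatment: the paper states these properties without proof as immediate bookkeeping consequences of the recursive construction, and your verification (disjoint placement into $\famS_i(u0)$ or $\famS_i(u1)$, induction for the partition, and $\famP_i(R)\subseteq\famP_{i-1}(R)$ with equality on the $0$-branch) is exactly the intended argument. Your explicit remark that processing in decreasing height order makes the defining condition well-posed is also the same observation the paper builds into the construction.
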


Next, we prove some crucial properties of this decomposition.
We will use them in the design and analysis of our coloring algorithms.

\begin{lemma}
\label{lem:partition}
For each\/ $i=0,\ldots,k$, each\/ $w\in B_i$, each rectangle\/ $R\in\famS_i(w)$, every witness point in\/ $\famP_i(R)$ belongs to fewer than\/ $2^{k-i}$ rectangles in\/ $\famV(R)\cap\famS_i(w)$.
\end{lemma}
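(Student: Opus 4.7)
The plan is to prove the statement by induction on $i$, combining the definition of $\famP_i$ with a simple doubling argument for the inductive step.

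For the base case $i=0$, we have $\famS_0(\epsilon)=\famR$ and we need every $p\in\famP_0(R)$ to lie in fewer than $2^k$ rectangles of $\famV(R)$. By construction, $p$ is a point on the top side of $R$ and hence in $R$; so if $p$ belongs to $t$ rectangles $R_1,\ldots,R_t\in\famV(R)$, then $\{R,R_1,\ldots,R_t\}$ is a clique and $t+1\leq\omega(\famR)\leq 2^k$.

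For the inductive step, fix $i\geq 1$ and $w\in B_i$, write $w=u\beta$ with $u\in B_{i-1}$ and $\beta\in\{0,1\}$, and consider any $R\in\famS_i(w)\subseteq\famS_{i-1}(u)$. Because step $i$ processes $\famS_{i-1}(u)$ in decreasing order of height, every rectangle of $\famV(R)\cap\famS_{i-1}(u)$ is processed strictly before $R$; hence the set of rectangles from $\famV(R)$ sitting in $\famS_i(u0)$ at the moment $R$ is processed is exactly $\famV(R)\cap\famS_i(u0)$, and symmetrically for $\famS_i(u1)$. If $\beta=0$, then $\famP_i(R)=\famP_{i-1}(R)$, and the splitting rule itself ensures that no point of $\famP_i(R)$ lies in $2^{k-i}$ or more rectangles of $\famV(R)\cap\famS_i(u0)$, which is exactly the desired bound.

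The more delicate case is $\beta=1$. Fix $p\in\famP_i(R)\subseteq\famP_{i-1}(R)$. Applying the induction hypothesis to $R\in\famS_{i-1}(u)$ shows that $p$ lies in fewer than $2^{k-(i-1)}=2\cdot 2^{k-i}$ rectangles of $\famV(R)\cap\famS_{i-1}(u)$. On the other hand, because $p$ is a witness for $R$ being sent to $\famS_i(u1)$, it lies in at least $2^{k-i}$ rectangles of $\famV(R)\cap\famS_i(u0)$. Since $\famV(R)\cap\famS_{i-1}(u)$ is the disjoint union of $\famV(R)\cap\famS_i(u0)$ and $\famV(R)\cap\famS_i(u1)$, subtraction yields fewer than $2^{k-i}$ rectangles of $\famV(R)\cap\famS_i(u1)$ containing $p$, as required. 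The one place that asks for care is the processing-order observation aligning the "in-progress" counts with the final sets $\famV(R)\cap\famS_i(u0)$ and $\famV(R)\cap\famS_i(u1)$; granted that, the step reduces to the arithmetic identity $2\cdot 2^{k-i}-2^{k-i}=2^{k-i}$, with the clique bound $\omega(\famR)\leq 2^k$ supplying the base case.
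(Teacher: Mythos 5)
Your proof is correct and follows essentially the same route as the paper: induction on $i$, with the base case from the clique bound $\omega(\famR)\leq 2^k$, the case $\beta=0$ from the splitting rule, and the case $\beta=1$ from the fact that the at least $2^{k-i}$ rectangles in $\famV(R)\cap\famS_i(u0)$ containing $p$ plus any $2^{k-i}$ in $\famV(R)\cap\famS_i(u1)$ would exceed the inductive bound $2^{k-i+1}$ for $\famS_{i-1}(u)$. The only differences are presentational: you phrase the last step as a direct subtraction rather than a contradiction, and you spell out the processing-order point (all of $\famV(R)\cap\famS_{i-1}(u)$ is assigned before $R$) that the paper leaves implicit.
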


\begin{proof}
The proof goes by induction on $i$.
For the base case when $i=0$, every point in $R$ belongs to fewer than $2^k$ rectangles in $\famV(R)$, because $\famR$ has clique number at most $2^k$.
For the induction step, let $i\in\{1,\ldots,k\}$ and $w=u0$ or $w=u1$, where $u\in B_{i-1}$.
For the sake of contradiction, suppose a witness point $p\in\famP_i(R)$ belongs to at least $2^{k-i}$ rectangles in $\famV(R)\cap\famS_i(w)$.
If $w=u0$, then $R$ should be taken to $\famS_i(u1)$ instead of $\famS_i(u0)$, because $p$ is a witness point in $\famP_{i-1}(R)$ that belongs to at least $2^{k-i}$ rectangles in $\famS_i(u0)$ with height greater than the height of $R$.
If $w=u1$, then by the fact that $p$ is a witness point in $\famP_i(R)$, the point $p$ additionally belongs to at least $2^{k-i}$ rectangles in $\famV(R)\cap \famS_i(u0)$, so it belongs to at least $2^{k-i+1}$ rectangles in $\famV(R)\cap\famS_{i-1}(u)$, contradicting the induction hypothesis.
\end{proof}

\begin{lemma}
\label{lem:nested structure}
For each\/ $i=0,\ldots,k$, each\/ $w\in B_i$, any two rectangles\/ $R,R'\in\famS_i(w)$ such that\/ $R'\in\famX(R)$, and any witness point\/ $p'\in\famP_i(R')$, there is a witness point\/ $p\in\famP_i(R)$ that belongs to\/ $R'$ and to all rectangles in\/ $\famV(R')$ containing\/ $p'$.
\end{lemma}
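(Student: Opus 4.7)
The plan is to prove the lemma by induction on $i$, leveraging two structural facts already established: the containment $\famV(R')\subseteq\famV(R)$ whenever $R'\in\famX(R)$, and the decreasing-height order in which rectangles are processed within each level of the hierarchy. The witness point $p$ will be constructed once, at the base case, as a purely geometric object; the inductive step will only verify that this same $p$ survives the promotion rules defining $\famP_i(R)$.

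For the base case $i=0$, let $x'$ be the $x$-coordinate of $p'$ and let $p$ be the intersection of the vertical line $x=x'$ with the top side of $R$. Since $R'$ crosses $R$, both vertical sides of $R'$ lie in the horizontal range of $R$, so $p$ is well-defined; and since $R'\in\famX(R)\subseteq\famV(R)$, the top side of $R$ sits strictly between the top and bottom sides of $R'$, so $p\in R'$. Any $Q\in\famV(R')$ containing $p'$ has $x'$ in its horizontal range and, because it straddles $R'$ (and hence $R$) vertically, covers the $y$-coordinate of the top side of $R$, so $p\in Q$. Finally, $p'$ is the intersection of the top side of $R'$ with the left or right side of some $Q_0\in\famV(R')\cup\{R'\}$; in the subcase $Q_0=R'$ the corresponding side of $R'\in\famV(R)$ meets the top side of $R$ at $p$, and in the subcase $Q_0\in\famV(R')\subseteq\famV(R)$ the same side of $Q_0$ meets the top side of $R$ at $p$. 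Either way, $p\in\famP_0(R)$.

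For the induction step, fix $i\in\{1,\ldots,k\}$ and write $w=ub$ with $u\in B_{i-1}$ and $b\in\{0,1\}$. By Proposition~\ref{prop:basic decomp} we have $R,R'\in\famS_{i-1}(u)$ and $p'\in\famP_i(R')\subseteq\famP_{i-1}(R')$, so the induction hypothesis furnishes a point $p\in\famP_{i-1}(R)$ lying in $R'$ and in every rectangle of $\famV(R')$ that contains $p'$. If $b=0$, then $\famP_i(R)=\famP_{i-1}(R)$ and we are immediately done. If $b=1$, the rule defining $\famP_i(R)$ requires that $p$ lie in at least $2^{k-i}$ rectangles of $\famV(R)\cap\famS_i(u0)$ processed before $R$. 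Since $R'\in\famS_i(u1)$ and $p'\in\famP_i(R')$, there exist at least $2^{k-i}$ rectangles $Q_1,\ldots,Q_m\in\famV(R')\cap\famS_i(u0)$ containing $p'$; each of them lies in $\famV(R')\subseteq\famV(R)$, is taller than $R$ and hence has already been processed, and contains $p$ by the induction hypothesis. Therefore $p\in\famP_i(R)$, completing the induction.

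The conceptually delicate point is the case $b=1$: we cannot afford to revise $p$ as the hierarchy deepens, so the same $p$ that the induction hypothesis delivers for geometric containment must automatically inherit enough ``crowding'' at level $i$ to meet the combinatorial definition of $\famP_i(R)$. This is precisely where the containment $\famV(R')\subseteq\famV(R)$ and the decreasing-height processing order conspire, transferring the witnesses certifying $p'\in\famP_i(R')$ into witnesses certifying $p\in\famP_i(R)$.
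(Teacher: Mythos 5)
Your proposal is correct and follows essentially the same route as the paper: induction on $i$, with the base case taking $p$ to be the point of the top side of $R$ directly below $p'$ (i.e.\ on the same left/right side of the rectangle $Q_0$ defining $p'$, using $\famV(R')\subseteq\famV(R)$), and the step for $w=u1$ transferring the $2^{k-i}$ rectangles of $\famV(R')\cap\famS_i(u0)$ containing $p'$ into rectangles of $\famV(R)\cap\famS_i(u0)$ containing $p$ via the induction hypothesis. Your explicit remark that these rectangles are taller than $R$ and hence already processed is a minor extra care the paper leaves implicit; otherwise the arguments coincide.
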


\begin{proof}
The proof goes by induction on $i$.
For the base case when $i=0$, the claim follows from the fact that $\famV(R')\subseteq\famV(R)$ when $R'\in\famX(R)$: if $p'$ is the intersection point of the top side of $R'$ with the left or right side of some rectangle $R''\in\famV(R')\cup\{R'\}$, then we let $p$ be the intersection point of the top side of $R$ with the same side of $R''$.
See Figure~\ref{fig:aligned witness} for an illustration.

For the induction step, let $i\in\{1,\ldots,k\}$ and $w=u0$ or $w=u1$, where $u\in B_{i-1}$.
Let $R,R'\in\famS_i(w)$ be such that $R'\in\famX(R)$.
If $w=u0$, then the claim follows directly from the induction hypothesis, as $\famP_i(R)=\famP_{i-1}(R)$ and $\famP_i(R')=\famP_{i-1}(R')$.
Now, suppose $w=u1$.
Let $p'$ be a witness point in $\famP_i(R')$.
Thus $p'$ belongs to at least $2^{k-i}$ rectangles in $\famV(R')\cap\famS_i(u0)$.
By the induction hypothesis, since $p'\in\famP_{i-1}(R')$, there is a witness point $p\in\famP_{i-1}(R)$ that belongs to all rectangles in $\famV(R')$ containing $p'$.
In particular, $p$ belongs to at least $2^{k-i}$ rectangles in $\famV(R')\cap\famS_i(u0)$ and thus in $\famV(R)\cap\famS_i(u0)$, as $\famV(R')\subseteq\famV(R)$.
This shows that $p\in\famP_i(R)$.
\end{proof}

\begin{corollary}
\label{cor:witness point}
For each\/ $i=0,\ldots,k$, each\/ $w\in B_i$, and any two rectangles\/ $R,R'\in\famS_i(w)$ such that\/ $R'\in\famX(R)$, there is a witness point\/ $p\in\famP_i(R)$ that belongs to\/ $R'$.
\end{corollary}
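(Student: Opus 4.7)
The plan is to read off the corollary as an immediate specialization of Lemma~\ref{lem:nested structure}. That lemma, applied to any $p'\in\famP_i(R')$, produces some $p\in\famP_i(R)$ lying in $R'$ (and also in every rectangle of $\famV(R')$ containing $p'$, but that extra information is not needed for the corollary). So the whole proof reduces to picking any witness point $p'\in\famP_i(R')$ and invoking the lemma; the conclusion ``belongs to $R'$'' is exactly what we need.

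The only thing to check is therefore that $\famP_i(R')$ is non-empty. This is built into the definition of the hierarchical decomposition, but I would spell out the trivial inductive verification to make the argument self-contained. At level $i=0$, the set $\famP_0(R')$ by construction contains the two top corners of $R'$, hence is non-empty. At step $i\geq 1$, if $R'$ lands in the $0$-branch, then $\famP_i(R')=\famP_{i-1}(R')$ and non-emptiness is inherited from the inductive hypothesis; if $R'$ lands in the $1$-branch, then by the very rule that selects this branch there is a witness point $p\in\famP_{i-1}(R')$ lying in at least $2^{k-i}$ rectangles of $\famV(R')$ already placed in the $0$-branch, and such a point is in $\famP_i(R')$ by construction.

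The main ``obstacle'' is not mathematical but presentational: one must make sure the reader sees that both Lemma~\ref{lem:nested structure} and the non-emptiness of $\famP_i(R')$ are available at the same index $i$ under the same assumption $R'\in\famX(R)$, so that the two ingredients can be combined without any off-by-one issue. Once that is observed, the proof is a one-line application of the lemma: fix any $p'\in\famP_i(R')$, invoke Lemma~\ref{lem:nested structure} with $R$, $R'$, and $p'$, and take $p$ to be the produced witness point.
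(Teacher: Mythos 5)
Your proposal is correct and follows exactly the paper's argument: the corollary is an immediate application of Lemma~\ref{lem:nested structure} to an arbitrary $p'\in\famP_i(R')$, with the only additional ingredient being the non-emptiness of $\famP_i(R')$, which you verify (the paper simply asserts it, as it is built into the construction). No gap to report.
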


\begin{proof}
Immediate from Lemma~\ref{lem:nested structure}, as the witness set $\famP_i(R')$ is always non-empty.
\end{proof}

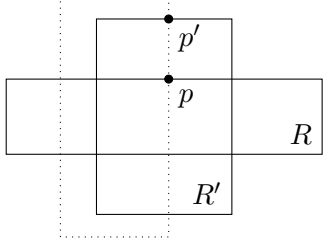
\begin{figure}[t]
    \centering
    \begin{tikzpicture}[xscale=1.2]
  \draw (0,0) rectangle (3.5,1);
  \draw (1,-0.8) rectangle (2.5,1.8);
  \draw[dotted] (0.6,-1.1) rectangle (1.8,2.1);
  \node[above left] at (3.5,0) {$R$};
  \node[above left] at (2.5,-0.8) {$R'$};
  \node[below right] at (1.8,1) {$p$};
  \node[below right,yshift=2pt] at (1.8,1.8) {$p'$};
  \tikzstyle{every node}=[circle,draw,fill,minimum size=3pt,inner sep=0pt]
  \foreach\y in {1,1.8} \node at (1.8,\y) {};
\end{tikzpicture}
    \caption{An illustration for the proof of Lemma~\ref{lem:nested structure}.
    In this figure, whenever $p'$ is a witness in $\famP_i(R')$, then $p$ is a witness in $\famP_i(R)$.}
    \label{fig:aligned witness}
\end{figure}

\begin{lemma}
\label{lem:clique}
For each\/ $i=0,\ldots,k$ and each rectangle\/ $R\in\famS_i(w)$, every clique in\/ $\famX(R)\cap\famS_i(w)$ has size at most\/ $2^{k-i+1}$.
\end{lemma}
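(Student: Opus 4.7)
The plan is to cover the clique $\famC$ by at most two witness points from $\famP_i(R)$ and then apply Lemma~\ref{lem:partition}, which says each such witness lies in fewer than $2^{k-i}$ rectangles of $\famV(R)\cap\famS_i(w)\supseteq\famC$; this will yield $|\famC|<2\cdot 2^{k-i}=2^{k-i+1}$.

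For the geometric setup, observe that every $R'\in\famC$ crosses $R$, so its horizontal extent $[\ell_{R'},r_{R'}]$ lies strictly inside that of $R$ while its vertical extent contains $[c,d]$ (writing $R=[a,b]\times[c,d]$). Two rectangles of $\famC$ therefore intersect iff their horizontal extents do, so the intervals $\{[\ell_{R'},r_{R'}]:R'\in\famC\}$ are pairwise intersecting, and Helly's theorem on the line gives a common subinterval $[\ell^*,r^*]=\bigcap_{R'\in\famC}[\ell_{R'},r_{R'}]$ with $\ell^*=\max_{R'\in\famC}\ell_{R'}$ and $r^*=\min_{R'\in\famC}r_{R'}$.

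Next I would choose the two witnesses. Pick $R^L,R^R\in\famC$ realising $r_{R^L}=r^*$ and $\ell_{R^R}=\ell^*$. Corollary~\ref{cor:witness point} applied to the pairs $(R,R^L)$ and $(R,R^R)$ produces witnesses of $\famP_i(R)$ with $x$-coordinates in $[\ell_{R^L},r^*]$ and $[\ell^*,r_{R^R}]$ respectively, so in particular $\famP_i(R)$ contains at least one witness with $x$-coordinate $\leq r^*$ and one with $x$-coordinate $\geq\ell^*$. Let $p_L$ be a witness of $\famP_i(R)$ whose $x$-coordinate is the largest subject to $x\leq r^*$, and let $p_R$ be one whose $x$-coordinate is the smallest subject to $x\geq\ell^*$ (they may coincide if some witness lies in $[\ell^*,r^*]$).

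The coverage and bound: for each $R'\in\famC$, Corollary~\ref{cor:witness point} supplies $q\in\famP_i(R)\cap R'$ with $x_q\in[\ell_{R'},r_{R'}]$. If $x_q\leq r^*$, maximality of $p_L$ gives $x_{p_L}\geq x_q\geq\ell_{R'}$ while $x_{p_L}\leq r^*\leq r_{R'}$, so $p_L\in R'$; otherwise $x_q>r^*\geq\ell^*$, and minimality of $p_R$ gives $x_{p_R}\leq x_q\leq r_{R'}$ while $x_{p_R}\geq\ell^*\geq\ell_{R'}$, so $p_R\in R'$. Applying Lemma~\ref{lem:partition} separately to $p_L$ and $p_R$ bounds the number of rectangles of $\famC$ containing each by $2^{k-i}-1$, giving $|\famC|<2\cdot 2^{k-i}=2^{k-i+1}$. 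The main difficulty will be precisely this coverage step: the hierarchical pruning can remove every witness whose $x$-coordinate falls in the Helly intersection $[\ell^*,r^*]$, so one cannot in general stab $\famC$ with a single witness; the key trick is that Corollary~\ref{cor:witness point} applied to the extremal rectangles $R^L,R^R$ forces surviving witnesses on both sides of $[\ell^*,r^*]$, whence the extremality of $p_L,p_R$ guarantees that two witnesses suffice.
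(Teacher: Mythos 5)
Your proof is correct and uses essentially the same ingredients as the paper's: Corollary~\ref{cor:witness point} to produce witness points of $\famP_i(R)$ inside the crossing rectangles, and Lemma~\ref{lem:partition} to bound how many clique members can contain a single witness. The paper packages this as a contradiction (a clique member avoiding the $2^{k-i}$ leftmost-left and $2^{k-i}$ rightmost-right rectangles would force one witness point into $2^{k-i}$ of them), whereas you run it directly as a two-point piercing argument with the extremal witnesses $p_L,p_R$, which even yields the marginally stronger bound $2^{k-i+1}-2$.
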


\begin{proof}
For the sake of contradiction, suppose that there is a clique $\famC$ of size greater than $2^{k-i+1}$ in $\famX(R)\cap\famS_i(w)$.
Let $q$ be a point on the top side of $R$ that lies in all rectangles in $\famC$.
Let $\famC_L$ be the $2^{k-i}$ rectangles from $\famC$ with leftmost left sides.
Let $\famC_R$ be the $2^{k-i}$ rectangles from $\famC$ with rightmost right sides.
Let $R'$ be a rectangle in $\famC\setminus(\famC_L\cup\famC_R)$, which exists as $|\famC|>|\famC_L|+|\famC_R|$.
By Corollary~\ref{cor:witness point} applied to $R$ and $R'$, there is a witness point $p\in\famP_i(R)$ such that $p\in R'$.
If $p$ is to the left of $q$, then $p$ belongs to all rectangles in $\famC_L$ (as they contain $q$ and their left sides are more to the left than the left side of $R'$), which contradicts Lemma~\ref{lem:partition}.
An analogous contradiction is reached for $\famC_R$ if $p$ is to the right of $q$.
\end{proof}

%The idea behind the clique reduction originates from [] and is used also in [].

\subsection*{Step 2: Clique Reduction}

For $\alpha\in\setN$, an \emph{$\alpha$-covering} of a rectangle $R$ is a clique that contains $R$, at least $\alpha$ rectangles intersecting the top side of $R$, and at least $\alpha$ rectangles intersecting the bottom side of $R$ (not necessarily different from those intersecting the top side).

\begin{lemma}
\label{lem:alpha-covering}
Every clique in\/ $\famR$ of size greater than\/ $2\alpha$ is an\/ $\alpha$-covering of one of its rectangles.
\end{lemma}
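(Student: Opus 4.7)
The plan is a straightforward pigeonhole argument once we identify the right geometric observation. Since every pair of rectangles in a clique $\famC$ shares a common point $p=(p_x,p_y)$, any two rectangles $R,R'\in\famC$ automatically overlap horizontally (both contain $p_x$ in their $x$-extent) and their $y$-extents both cover $p_y$. Consequently, $R'$ intersects the top side of $R$ if and only if the top $y$-coordinate of $R'$ is at least that of $R$: the upper bound is what actually needs to be checked, because the required lower bound ($R'$'s bottom $\le$ $R$'s top) already follows from $R'$'s bottom $\le p_y\le R$'s top. The symmetric statement holds for bottom sides.

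Given this observation, I would let $\famT$ be any set of $\alpha$ rectangles from $\famC$ with the largest top $y$-coordinates, and $\famB$ be any set of $\alpha$ rectangles from $\famC$ with the smallest bottom $y$-coordinates, with ties broken arbitrarily. Since $|\famT\cup\famB|\le 2\alpha<|\famC|$, there exists some $R\in\famC\setminus(\famT\cup\famB)$.

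For this $R$, the choice of $\famT$ forces every element of $\famT$ to have a top $y$-coordinate at least as large as that of $R$, hence by the initial observation each of the $\alpha$ rectangles in $\famT$ intersects the top side of $R$. Symmetrically, each of the $\alpha$ rectangles in $\famB$ intersects the bottom side of $R$. Since $\famC$ is a clique containing $R$ together with these two sets, it is by definition an $\alpha$-covering of $R$.

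I do not expect any real obstacle here: the proof is a two-line pigeonhole on top and bottom $y$-coordinates. The only point that warrants minimal care is tie-breaking in the definition of $\famT$ and $\famB$, but this is harmless — whatever tie-breaking is chosen, the mere fact that $R$ is not among the $\alpha$ highest-topped rectangles already guarantees that all $\alpha$ rectangles in $\famT$ have top coordinate at least that of $R$, which is all that is needed.
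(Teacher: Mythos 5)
Your proof is correct and follows essentially the same pigeonhole argument as the paper: pick the $\alpha$ rectangles with top-most top sides and the $\alpha$ with bottom-most bottom sides, and take $R$ outside their union. Your explicit verification that sharing a common point reduces ``intersects the top side of $R$'' to a comparison of top $y$-coordinates is exactly the (implicit) justification behind the paper's one-line conclusion.
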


\begin{proof}
Let $\famC$ be a clique in $\famR$ of size greater than $2\alpha$.
Let $\famC_T$ be the $\alpha$ rectangles in $\famC$ with top-most top sides.
Let $\famC_B$ be the $\alpha$ rectangles in $\famC$ with bottom-most bottom sides.
Let $R$ be a rectangle in $\famC\setminus(\famC_T\cup\famC_B)$, which exists as $|\famC|>|\famC_T|+|\famC_B|$.
The rectangles in $\famC_T$ and $\famC_B$ witness that $\famC$ is an $\alpha$-covering of $R$.
\end{proof}

For each $i=0,\ldots,k$ and each $w\in B_i$, let $\famT_i(w)$ be the set of rectangles $R\in\famS_i(w)$ such that there is a $2^{k-i+2}$-covering of $R$ in $\famS_i(w)$.
Observe that $\famT_i(w)=\emptyset$ when $i\leq 2$ and that it is easy to compute the sets $\famT_i(w)$ in polynomial time.
By Lemma~\ref{lem:alpha-covering}, at least one rectangle from every clique in $\famS_i(w)$ of size greater than $2^{k-i+3}$ belongs to $\famT_i(w)$, so $\omega(\famS_i(w)\setminus\famT_i(w))\leq 2^{k-i+3}$.

\begin{lemma}
\label{lem:sparse-covering}
For each\/ $i=0,\ldots,k$ and each\/ $w\in B_i$, the set\/ $\famT_i(w)$ is\/ $3$-sparse.
\end{lemma}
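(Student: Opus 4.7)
For each rectangle $R\in\famT_i(w)$, fix a $2^{k-i+2}$-covering $\famC_R$ of $R$ in $\famS_i(w)$, and let $\famC_R^T,\famC_R^B\subseteq\famC_R$ denote the sub-families intersecting the top and bottom side of $R$. Since $\famC_R^T$ is a clique of at least $2^{k-i+2}$ rectangles all reaching the top side of $R$, they share a common point on that side, which I call $q_R^T$; define $q_R^B$ symmetrically on the bottom. The plan is to use $q_R^T$, $q_R^B$, and a common interior point $q_R$ of $\famC_R$ as the three designated points of $R$, with the refinement---crucial for the argument below---that $q_R^T$ and $q_R^B$ are actually chosen as witness points of $\famP_i(R)$ lying in the common horizontal extent of $\famC_R^T$ and $\famC_R^B$ respectively.

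To verify 3-sparseness, take any crossing pair $R,R'\in\famT_i(w)$, and WLOG assume $R'\in\famX(R)$, so that $R\cap R'=[a',b']\times[c,d]$ is a vertical strip of $R$. If $q_{R'}\in R$, then $q_{R'}$ lies in $R\cap R'$ and is one of the six designated points, and we are done. Otherwise $q_{R'}$ is vertically outside $R$; by symmetry assume it lies strictly below $R$. Each rectangle of $\famC_{R'}^T$ then contains $q_{R'}$ (below $R$) and reaches the top of $R'$ (above $R$), so it spans $R$ vertically; since it also overlaps $R$ horizontally through $q_{R',x}\in[a',b']\subseteq[a,b]$, it belongs to $\famV(R)\cap\famS_i(w)$. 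Thus $\famC_{R'}^T$ is a clique inside $\famV(R)\cap\famS_i(w)$ of size $2^{k-i+2}$; Lemma~\ref{lem:clique} bounds its intersection with $\famX(R)$ by $2^{k-i+1}$, so at least $2^{k-i+1}$ of its members have corner intersections with $R$, and since they already span $R$ vertically, each must extend past the left or right vertical side of $R$. A pigeonhole then selects $\geq 2^{k-i}$ rectangles extending past the same side---say the left---and each of them contains the whole left side of $R$, and in particular the entire segment $[a,q_{R',x}]\times\{d\}$ on the top of $R$.

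Invoking Lemma~\ref{lem:partition}, no witness point of $\famP_i(R)$ may belong to $2^{k-i}$ rectangles of $\famV(R)\cap\famS_i(w)$; hence the abscissa of $q_R^T$ (a witness point in $\famP_i(R)$ by our choice) cannot lie in $[a,q_{R',x}]$. Combined with the fact that $q_R^T$ lies in every rectangle of $\famC_R^T$, I expect this to force its abscissa into $[q_{R',x},b']\subseteq[a',b']$, placing $q_R^T$ inside $R\cap R'$. The main obstacle, and the delicate part of the argument, is precisely verifying this last step: one must ensure that the chosen witness points $q_R^T$ and $q_R^B$ make the case analysis close regardless of whether the pigeonhole sends us to the left or right side of $R$, and whether $q_{R'}$ sits above or below $R$. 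The likely remedy, if a single witness point on each of the top and bottom does not suffice, is to replace $q_R^T$ by two witness points of $\famP_i(R)$ flanking the common horizontal span of $\famC_R^T$ on both sides (with a symmetric choice on the bottom), so that the left-side and right-side subcases are each handled by a dedicated designated point. This careful interplay between the covering $\famC_R$ and the witness set $\famP_i(R)$ is where the main technical work of the proof lies.
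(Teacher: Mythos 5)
Your middle portion matches the paper's engine: taking the common point $q_{R'}$ of the covering of $R'$, observing that if it lies below $R$ then the top part $\famC_{R'}^T$ of that covering sits inside $\famV(R)\cap\famS_i(w)$, discarding at most $2^{k-i+1}$ crossing rectangles via Lemma~\ref{lem:clique}, and pigeonholing the rest onto the left or right side of $R$ is exactly what the paper does. But the proof does not close, for two concrete reasons. First, your choice of designated points is unjustified: you require $q_R^T$ (and $q_R^B$) to be a witness point of $\famP_i(R)$ lying in the common horizontal extent of $\famC_R^T$, but the witness set $\famP_i(R)$ is fixed by the hierarchical decomposition and has no reason to meet that (possibly very short) common span; nothing in the construction or in Lemmas~\ref{lem:partition}--\ref{lem:clique} produces such a point, and your fallback (two witness points of $\famP_i(R)$ \emph{flanking} the span of $\famC_R^T$) need not exist either. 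Second, even granting that choice, the last step fails, as you yourself flag: Lemma~\ref{lem:partition} only excludes the abscissa of $q_R^T$ from $[a,q_{R',x}]$, which pushes it to the right of $q_{R',x}$ but gives no upper bound by $b'$, so $q_R^T$ need not land in $R\cap R'$ at all.

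The paper avoids both problems by decoupling the top-side points from the covering and by aiming for a different contradiction. Its three points are the \emph{leftmost} and \emph{rightmost} witness points $p_1^R,p_2^R$ of $\famP_i(R)$ (both on the top side of $R$, always existing since $\famP_i(R)\neq\emptyset$) together with one common point $p_3^R$ of the chosen covering. Assuming none of the six points lies in $R\cap R'$, Corollary~\ref{cor:witness point} supplies a witness point of $\famP_i(R)$ inside $R'$, which forces $R'$ to lie horizontally strictly between $p_1^R$ and $p_2^R$; consequently the abscissa of $p_3^{R'}$ exceeds that of $p_1^R$, so the $\geq 2^{k-i}$ rectangles from your pigeonhole step that contain the left side of $R$ and the abscissa of $p_3^{R'}$ all contain the segment of the top side of $R$ through $p_1^R$, and this clique in $\famV(R)\cap\famS_i(w)$ through the witness point $p_1^R$ contradicts Lemma~\ref{lem:partition}. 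In other words, the contradiction is obtained \emph{at the extreme witness point}, not by trying to force a designated point into $R\cap R'$; this is the missing idea, and it also explains why no bottom-side witness points are needed (the case $q_{R'}$ above $R$ is handled by taking the rectangles of the covering of $R'$ meeting the bottom side of $R'$, which again contain a top-side segment of $R$).
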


\begin{proof}
Let $R$ be a rectangle in $\famT_i(w)$.
We define three points $p_1^R,p_2^R,p_3^R\in R$ as follows.
Let $p_1^R$ be the leftmost point and $p_2^R$ be the rightmost point in the witness set $\famP_i(R)$.
Recall that these points lie on the top side of $R$.
Choose a clique $\famC$ that forms a $2^{k-i+2}$-covering of $R$ in $\famS_i(w)$, and let $p_3^R$ be a point in the intersection of all rectangles in $\famC$ (which include $R$).

We verify that these points capture all intersecting pairs of rectangles, as in the definition of sparseness.
Consider two crossing rectangles $R,R'\in\famT_i(w)$ such that $R'\in\famX(R)$.
We claim that at least one point of $p_1^R,p_2^R, p_3^R,p_1^{R'},p_2^{R'},p_3^{R'}$ lies in $R'\cap R$.
Suppose, for the sake of contradiction, that this is not the case.

First, observe that $R'$ cannot lie to the left of $p_1^R$: due to Corollary~\ref{cor:witness point}, this would mean that some witness point in $\famP_i(R)$ lies to the left of $p_1^R$, contradicting to the choice of $p_1^R$.
For the same reason, $R'$ cannot lie to the right of $p_2^R$, so it must lie between $p_1^R$ and $p_2^R$.

Since $p_3^{R'}\notin R$, the point $p_3^{R'}$ must lie either above or below $R$.
Assume that it lies below $R$ (see Figure~\ref{fig:sparse}; the other case is analogous, by symmetry).
Since $R'\in\famT_i(w)$, there is a $2^{k-i+2}$-covering $\famC$ of $R'$ in $\famS_i(w)$.
Let $\famC'$ be the rectangles in $\famC$ that intersect the top side of $R'$.
Thus $|\famC'|\geq 2^{k-i+2}$, by the definition of $2^{k-i+2}$-covering.
Since $R$ lies above $p_3^{R'}$ and below every point on the top side of $R'$, we have $\famC'\subseteq\famV(R)$.
Lemma~\ref{lem:clique} yields $|\famC'\cap\famX(R)| \leq 2^{k-i+1}$, so $|\famC'\setminus\famX(R)|\geq 2^{k-i+1}$.
Each rectangle in $\famC'\setminus\famX(R)$ fully contains the left or the right side of $R$ (or both).
Let $\famC'_1$ be the rectangles in $\famC'\setminus\famX(R)$ containing the left side of $R$, and let $\famC'_2$ be those containing the right side of $R$, so that $\famC'_1\cup\famC'_2=\famC'\setminus\famX(R)$.
It follows that $|\famC'_j|\geq 2^{k-i}$ for some $j\in\{1,2\}$.
This contradicts Lemma~\ref{lem:partition}, because $\famC'_j$ is a clique in $\famV(R)\cap\famS_i(w)$ of size at least $2^{k-i}$ containing the witness point $p_j^R\in\famP_i(R)$.
\end{proof}

\begin{figure}
    \centering
    \begin{tikzpicture}[xscale=1.2]
  \draw (0,0) rectangle (5,1);
  \draw (2,-1.2) rectangle (3.6,1.8);
  \draw[dashed] (-0.5,-1) rectangle (3,2.2);
  \draw[dashed] (1.7,-0.8) rectangle (5.6,2);
  \draw[dotted] (1,-0.6) rectangle (2.6,1.6);
  \node[above left] at (5,0) {$R$};
  \node[below left] at (3.6,1.8) {$R'$};
  \node[below] at (0.5,1) {$p_1^R$};
  \node[below] at (4.1,1) {$p_2^R$};
  \node[right] at (2.3,-0.3) {$p_3^{R'}$};
  \tikzstyle{every node}=[circle,draw,fill,minimum size=3pt,inner sep=0pt]
  \foreach\x in {0.5,4.1} \node at (\x,1) {};
  \node at (2.3,-0.3) {};
\end{tikzpicture}
    \caption{An illustration for the proof of Lemma~\ref{lem:sparse-covering}.
    The two dashed rectangles belong to $\famC'_1$ or $\famC'_2$; each such rectangle contains $p_1^R$ or $p_2^R$, respectively.
    The dotted rectangle belongs to $\famX(R)$.}
    \label{fig:sparse}
\end{figure}
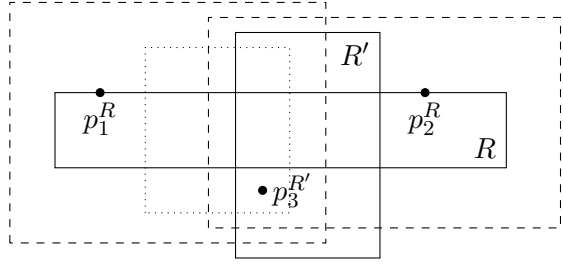

Finally, we present the coloring algorithm.
It proceeds in $k$ rounds.
Initially, we have $\famR_0=\famR$.
At round $i=1,\ldots, k$, we obtain $\famR_i$ by removing $\bigcup_{w\in B_i}\famT_i(w)$ from $\famR_{i-1}$.
After the last round, we are left with the set $\famR_k$.
Let $\famT'_i(w) =\famT_i(w) \cap \famR_{i-1}$ and $\famR'_i= \bigcup_{w\in B_i} \famT'_i(w)$ be the set of rectangles removed in round $i$.
Hence, the families $\famR'_1,\ldots,\famR'_k$ and $\famR_k$ form a partition of $\famR$.
We argue that each family in this partition can be colored using $O(2^k)$ colors.

For each $i=1,\ldots,k$ and each $w\in B_i$, by Lemma~\ref{lem:sparse-covering}, the family $\famT'_i(w)$ is $3$-sparse, and since $\famT'_i(w)\subseteq\famS_i(w)\cap\famR_{i-1}\subseteq\famS_{i-1}(u)\setminus\famT_{i-1}(u)$ (where $u$ is the prefix of $w$ in $B_{i-1}$) and $\omega(\famS_{i-1}(u)\setminus\famT_{i-1}(u))\leq 2^{k-i+4}$ (by the remark before Lemma~\ref{lem:sparse-covering}), the maximum size of a clique in $\famT'_i(w)$ is at most $2^{k-i+4}$.
Therefore, by Lemma~\ref{lem:sparse}, each set $\famT'_i(w)$ is $O(2^{k-i})$-colorable.
Since $|B_i|=2^i$, by coloring each set $\famT'_i(w)$ where $w\in B_i$ using a separate bunch of $O(2^{k-i})$ colors, we color $\famR'_i$ using $O(2^k)$ colors.
Finally, for each $w\in B_k$, we have $\famS_k(w)\cap\famR_k\subseteq\famS_k(w)\setminus\famT_k(w)$, so (by the remark before Lemma~\ref{lem:sparse-covering}) $\omega(\famS_k(w)\cap\famR_k)\leq\omega(\famS_k(w)\setminus\famT_k(w))\leq 8$, and therefore (by Lemma~\ref{lem:omega square}) the set $\famS_k(w)\cap R_k$ can be colored using $O(1)$ colors.
Again, using a separate bunch of $O(2^{k-i})$ colors on each set $\famS_k(w)\cap\famR_k$ where $w\in B_k$, we color the family $\famR_k$ using $O(2^k)$ colors.

\section{An \texorpdfstring{$O(\log\log n)$}{O(log log n)}-Approximation Algorithm for MWISR}
\label{sec:ind set}

In this section, we present a reduction from MWISR to the coloring problem, which leads to a polynomial-time $O(\log\log n)$-approximation algorithm for MWISR\@.
Similar reductions from the maximum weight independent set problem to the coloring problem have been already used in the literature of approximation algorithms \cite{chalermsook2011coloring,chalermsook2009maximum,lewin2002routing}.
We also show that the reduction can be made deterministic via a derandomization trick similar to the one used by Chan and Har-Peled~\cite{chan2012approximation}.

Let $\famR$ be a family of $n$ rectangles, and for each $R\in\famR$, let $w_R$ be the weight associated with $R$.
Assume that $w_R>0$ for each $R\in\famR$ (otherwise $R$ can be disregarded).
Let $\MaxCliques$ be the family of inclusion-maximal cliques in $\famR$.
Thus $|\MaxCliques|\leq n^2$, because the intersection of every such clique is a rectangle whose top left corner is the intersection point of the top side of some rectangle in $\famR$ with the left side of some (possibly the same) rectangle in $\famR$.
Consider the following clique-constrained LP relaxation of the maximum weight independent set problem:
\begin{alignat*}{2}
\text{maximize}&\quad\sum_{\!\!R\in\famR\!\!}w_Rx_R\mskip-50mu\\
\text{subject to}&\quad\sum_{\!R\in\famC\!}x_R&&\leq 1\quad\text{for every }\famC\in\MaxCliques,\\
&&\llap{$x_R$}&\geq 0\quad\text{for every }R\in\famR.
\end{alignat*}
Let $(x^\star_R)_{R\in\famR}$ be an optimal fractional solution to the LP, and let $w^\star$ be the optimum value, which is therefore an upper bound on the maximum weight of an independent set in $\famR$.
Let $m=\lceil 9\ln n\rceil$.

\begin{claim}
There is an integral vector\/ $(y_R)_{R\in\famR}\in\{0,\ldots,m\}^\famR$ such that
\[\sum_{R\in\famC}y_R\leq 2m\quad\text{for every}\enspace\famC\in\MaxCliques\quad\text{and}\quad\sum_{R\in\famR}w_Ry_R\geq\frac{mw^\star}{2}.\]
Moreover, such a vector can be computed by a polynomial-time deterministic algorithm.
\end{claim}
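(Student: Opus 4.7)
The plan is to produce $y$ by $m$-round randomized rounding of the LP solution $x^\star$ and then to derandomize with the method of conditional expectations guided by a pessimistic estimator. Concretely, I would run $m$ independent rounds; in round $j$ each $R\in\famR$ is placed in a random set $S_j$ independently with probability $x^\star_R$, and I set $y_R:=|\{j:R\in S_j\}|$. Then $y_R\in\{0,\ldots,m\}$ by construction, $\mathbb{E}\bigl[\sum_R w_R y_R\bigr]=mw^\star$, and $\mathbb{E}\bigl[\sum_{R\in\famC}y_R\bigr]=m\sum_{R\in\famC}x^\star_R\leq m$ for every $\famC\in\MaxCliques$.

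For the clique side I would apply a multiplicative Chernoff bound to $\sum_{R\in\famC}y_R$, which is a sum of $m|\famC|$ independent Bernoullis of expectation at most $m$; since $m\geq 9\ln n$ this yields $\Pr\bigl[\sum_{R\in\famC}y_R>2m\bigr]\leq e^{-m/3}\leq n^{-3}$, and a union bound over the at most $n^2$ maximal cliques caps the overall clique-failure probability at $1/n$.

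For the weight side the key observation is that $w^\star\geq w_{\max}:=\max_{R\in\famR}w_R$, because the indicator of any single rectangle is a feasible integer LP solution. Rescaling $X:=\sum_R w_R y_R$ by $w_{\max}$ makes $X/w_{\max}$ a sum of independent $[0,1]$-valued random variables with expectation $mw^\star/w_{\max}\geq m$, and the Chernoff lower-tail bound then gives $\Pr[X<mw^\star/2]\leq\exp(-mw^\star/(8w_{\max}))\leq e^{-m/8}\leq n^{-9/8}$. Summing the two failure probabilities produces a quantity strictly less than $1$, so a good $y$ exists.

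To derandomize, I would use the pessimistic estimator $\Phi$ equal to the sum of the Chernoff moment-generating upper bounds for all $n^2$ clique events (with parameter $\ln 2$) and for the weight-deficit event (with parameter $1/(2w_{\max})$). These bounds factor over the $nm$ independent Bernoullis, so $\Phi$ is polynomial-time computable; processing the Bernoullis in any fixed order and committing, at each step, to the value that does not increase the conditional $\Phi$ yields a deterministic assignment with $\Phi<1$ at the end, hence no bad event occurs. I expect the main technical obstacle to be the weight-deficit event itself: without the bound $w^\star\geq w_{\max}$ a direct Chernoff analysis would be vacuous, and it is precisely this rescaling that lets the pessimistic estimator remain below $1$ and be tracked efficiently.
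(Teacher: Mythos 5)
Your proof is correct, but it certifies the two properties of $(y_R)_{R\in\famR}$ differently from the paper. The paper rounds each coordinate once ($y_R=\lfloor mx^\star_R\rfloor$ plus a Bernoulli for the fractional part, rather than your $m$ independent rounds), uses a Chernoff tail only for the clique events, and then avoids any concentration argument for the weight altogether: after disposing of the case where some single rectangle has weight at least $w^\star/2$, it introduces the penalized random variable $\xi=\sum_R w_Ry_R-\tfrac{mnw^\star}{2}\sum_{\famC\in\MaxCliques}\zeta_\famC$, shows $E\xi>mw^\star/2$, and observes that because every weight is below $w^\star/2$ and $y_R\leq m$, any clique violation forces $\xi<0$; so a single run of the method of conditional expectations on $E\xi$ (with the conditional violation probabilities $E\zeta_\famC$ computed exactly by dynamic programming) yields a vector that is simultaneously feasible and heavy. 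You instead treat the weight deficit as one more bad event, which requires the extra (easy but essential) observation $w^\star\geq w_{\max}$ to make the lower-tail Chernoff non-vacuous, and you derandomize with a pessimistic MGF estimator summed over the $\leq n^2$ clique events and the weight event; your $m$-round scheme is what lets $X/w_{\max}$ decompose into $nm$ independent $[0,1]$-valued summands, and the constants work out ($n^{-1}+n^{-9/8}<1$ for $n\geq 2$, with $n=1$ trivial). The trade-off: your route is a more standard union-bound-plus-pessimistic-estimator argument and avoids the paper's case split on heavy rectangles, while the paper's penalty trick needs only the clique-side tail bound and only exact conditional expectations of Bernoulli sums, at the price of the special case and the somewhat less transparent $\xi$. (Incidentally, your clique-side bound, stated via an upper bound $m$ on the mean, is cleaner than the paper's intermediate step $\exp(-\frac13E\sum_{R\in\famC}x''_R)\leq\exp(-m/3)$, which as written presumes a lower bound on that mean and needs a Bernstein-type fix.)
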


\begin{proof}
If $w_R\geq w^\star/2$ for some $R\in\famR$, then it is enough to set $y_R=m$ and $y_{R'}=0$ for $R'\in\famR\setminus\{R\}$.
Therefore, assume henceforth that $w_R<w^\star/2$ for every $R\in\famR$.

For each $R\in\famR$, let $x'_R=\lfloor mx^\star_R\rfloor$, let $x''_R$ be a random variable in $\{0,1\}$ such that $Ex''_R=P(x''_R=1)=mx^\star_R-x'_R$, and let $y_R=x'_R+x''_R$.
It follows that $0\leq x'_R\leq y_R\leq\lceil mx^\star_R\rceil\leq m$ for $R\in\famR$.
The LP inequality for a clique $\famC\in\MaxCliques$ yields
\[m\geq\sum_{R\in\famC}mx^\star_R=\sum_{R\in\famC}x'_R+E\biggl(\sum_{R\in\famC}x''_R\biggr),\]
which implies
\begin{align*}
P\biggl(\sum_{R\in\famC}y_R>2m\biggr)&\leq P\biggl(\sum_{R\in\famC}x'_R+\sum_{R\in\famC}x''_R>2\sum_{R\in\famC}x'_R+2E\biggl(\sum_{R\in\famC}x''_R\biggr)\biggr)\\
&\leq P\biggl(\sum_{R\in\famC}x''_R>2E\biggl(\sum_{R\in\famC}x''_R\biggr)\biggr)\\
&<\exp\biggl(-\frac{1}{3}E\biggl(\sum_{R\in\famC}x''_R\biggr)\biggr)\leq\exp\biggl(-\frac{m}{3}\biggr)\leq n^{-3},
\end{align*}
where the strict inequality is the following form of the Chernoff bound for a sum of independent zero-one random variables $z$: $P(z>2Ez)<\exp(-\frac{1}{3}Ez)$.
For each clique $\famC\in\MaxCliques$, let a random variable $\zeta_\famC$ be defined as follows:
\[\zeta_\famC=\begin{cases}
1&\text{if }\sum_{R\in\famC}y_R>2m,\\
0&\text{otherwise,}
\end{cases}\]
so that $E\zeta_\famC<n^{-3}$.
Let a random variable $\xi$ be defined as follows:
\[\xi=\sum_{R\in\famR}w_Ry_R-\frac{mnw^\star}{2}\sum_{\famC\in\MaxCliques}\zeta_\famC.\]
It follows that
\[E\xi=\sum_{R\in\famR}w_REy_R-\frac{mnw^\star}{2}\sum_{\famC\in\MaxCliques}E\zeta_\famC>m\sum_{R\in\famR}w_Rx^\star_R-\frac{mnw^\star}{2n^3}{|\MaxCliques|}\geq mw^\star-\frac{mw^\star}{2}=\frac{mw^\star}{2}.\]
Therefore, there is a choice of $(y_R)_{R\in\famR}$ where $\xi>mw^\star/2$.
It can be computed by a polynomial-time deterministic algorithm using the conditional expectation method, because  $E\zeta_\famC$ can be computed in polynomial time for every clique $\famC\in\MaxCliques$ by dynamic programming.
Moreover, whenever $\sum_{R\in\famC}y_R>2m$ for some $\famC\in\MaxCliques$, then $\xi<0$ (because $w_R<w^\star/2$ and $y_R\leq m$ for every $R\in\famR$), so the resulting choice of $(y_R)_{R\in\famR}$ satisfies the conditions of the claim.
\end{proof}

Now, let $(y_R)_{R\in\famR}$ be as in the claim.
Let $\famR'$ be a multiset of rectangles where each rectangle $R$ from $\famR$ occurs in $y_R$ copies.
The first condition of the claim implies that $\famR'$ has clique number at most $2m$, so it has chromatic number $O(m\log m)$, and moreover, a proper $O(m\log m)$-coloring of $\famR'$ can be computed in polynomial time.
The second condition of the claim implies that the rectangles in $\famR'$ have total weight at least $mw^\star/2$, so some color class in the aforesaid coloring of $\famR'$ has total weight $\varOmega(w^\star/\log m)=\varOmega(w^\star/\log\log n)$.
That color class can be returned as a requested $O(\log\log n)$-approximation of the maximum weight independent set in $\famR$.

\bibliographystyle{plain}
\bibliography{ref}

\begin{thebibliography}{10}

\bibitem{adamaszek2019approximation}
Anna Adamaszek, Sariel Har-Peled, and Andreas Wiese.
\newblock Approximation schemes for independent set and sparse subsets of
  polygons.
\newblock {\em Journal of the ACM}, 66(4):Article 29, 2019.

\bibitem{adamaszek2013approximation}
Anna Adamaszek and Andreas Wiese.
\newblock Approximation schemes for maximum weight independent set of
  rectangles.
\newblock In {\em Proceedings of the 54th Annual IEEE Symposium on Foundations
  of Computer Science}, pages 400--409. IEEE, 2013.

\bibitem{agarwal1998label}
Pankaj~K. Agarwal, Marc van Kreveld, and Subhash Suri.
\newblock Label placement by maximum independent set in rectangles.
\newblock {\em Computational Geometry}, 11(3--4):209--218, 1998.

\bibitem{asplund1960coloring}
Edgar Asplund and Branko Gr\"unbaum.
\newblock On a coloring problem.
\newblock {\em Mathematica Scandinavica}, 8(1):181--188, 1960.

\bibitem{bielecki1948problem}
Adam Bielecki.
\newblock Problem 56.
\newblock {\em Colloqium Mathematicum}, 1(4):333, 1948.

\bibitem{bonsma2014constant}
Paul Bonsma, Jens Schulz, and Andreas Wiese.
\newblock A constant-factor approximation algorithm for unsplittable flow on
  paths.
\newblock {\em SIAM journal on computing}, 43(2):767--799, 2014.

\bibitem{chalermsook2011coloring}
Parinya Chalermsook.
\newblock Coloring and maximum independent set of rectangles.
\newblock In {\em Approximation, Randomization, and Combinatorial Optimization.
  Algorithms and Techniques}, pages 123--134. Springer, 2011.

\bibitem{chalermsook2009maximum}
Parinya Chalermsook and Julia Chuzhoy.
\newblock Maximum independent set of rectangles.
\newblock In {\em Proceedings of the 20th Annual ACM-SIAM Symposium on Discrete
  Algorithms}, pages 892--901. SIAM, 2009.

\bibitem{chalermsook2016note}
Parinya Chalermsook and Daniel Vaz.
\newblock A note on fractional coloring and the integrality gap of {LP} for
  {M}aximum {W}eight {I}ndependent {S}et.
\newblock {\em Electronic Notes in Discrete Mathematics}, 55:113--116, 2016.

\bibitem{chan2003polynomial}
Timothy~M. Chan.
\newblock Polynomial-time approximation schemes for packing and piercing fat
  objects.
\newblock {\em Journal of Algorithms}, 46(2):178--189, 2003.

\bibitem{chan2012approximation}
Timothy~M. Chan and Sariel Har-Peled.
\newblock Approximation algorithms for maximum independent set of pseudo-disks.
\newblock {\em Discrete and Computational Geometry}, 48(2):373--392, 2012.

\bibitem{chekuri2020fast}
Chandra Chekuri, Sariel Har-Peled, and Kent Quanrud.
\newblock Fast {LP}-based approximations for geometric packing and covering
  problems.
\newblock In {\em Proceedings of the 14th Annual ACM-SIAM Symposium on Discrete
  Algorithms}, pages 1019--1038. SIAM, 2020.

\bibitem{chuzhoy2016approximating}
Julia Chuzhoy and Alina Ene.
\newblock On approximating maximum independent set of rectangles.
\newblock In {\em 2016 IEEE 57th Annual Symposium on Foundations of Computer
  Science (FOCS)}, pages 820--829. IEEE, 2016.

\bibitem{doerschler1992rule}
Jeffery~S. Doerschler and Herbert Freeman.
\newblock A rule-based system for dense-map name placement.
\newblock {\em Communications of the ACM}, 35(1):68--80, 1992.

\bibitem{erdos1959graph}
Paul Erd{\H{o}}s.
\newblock Graph theory and probability.
\newblock {\em Canadian Journal of Mathematics}, 11:34--38, 1959.

\bibitem{erlebach2005polynomial}
Thomas Erlebach, Klaus Jansen, and Eike Seidel.
\newblock Polynomial-time approximation schemes for geometric intersection
  graphs.
\newblock {\em SIAM Journal on Computing}, 34(6):1302--1323, 2005.

\bibitem{feige2004approximating}
Uriel Feige.
\newblock Approximating maximum clique by removing subgraphs.
\newblock {\em SIAM Journal on Discrete Mathematics}, 18(2):219--225, 2004.

\bibitem{fowler1981optimal}
Robert~J. Fowler, Michael~S. Paterson, and Steven~L. Tanimoto.
\newblock Optimal packing and covering in the plane are {NP}-complete.
\newblock {\em Information Processing Letters}, 12(3):133--137, 1981.

\bibitem{fukuda1996data}
Takeshi Fukuda, Yasukiko Morimoto, Shinichi Morishita, and Takeshi Tokuyama.
\newblock Data mining using two-dimensional optimized association rules:
  Scheme, algorithms, and visualization.
\newblock {\em Acm Sigmod Record}, 25(2):13--23, 1996.

\bibitem{hastad1996clique}
Johan H{{\aa}}stad.
\newblock Clique is hard to approximate within {$n^{1-\varepsilon}$}.
\newblock {\em Acta Mathematica}, 182(1):105--142, 1999.

\bibitem{hendler1998schranken}
Clemens Hendler.
\newblock Schranken f\"ur {F}\"arbungs- und {C}liquen\"uberdeckungszahl
  geometrisch repr\"asentierbarer {G}raphen.
\newblock Master's thesis, Freie Universit\"at Berlin, 1998.

\bibitem{khanna1998approximating}
Sanjeev Khanna, Shan Muthukrishnan, and Mike Paterson.
\newblock On approximating rectangle tiling and packing.
\newblock In {\em Proceedings of the ninth annual ACM-SIAM Symposium on
  Discrete Algorithms}, pages 384--393. SIAM, 1998.

\bibitem{kierstead1981extremal}
Henry~A. Kierstead and William~T. Trotter.
\newblock An extremal problem in recursive combinatorics.
\newblock In Frederic Hoffman, editor, {\em 12th Southeastern Conference on
  Combinatorics, Graph Theory, and Computing (CGTC 1981)}, volume~33 of {\em
  Congressus Numerantium}, pages 143--153. Utilitas Mathematica, Winnipeg,
  1981.

\bibitem{kostochka2004coloring}
Alexandr Kostochka.
\newblock Coloring intersection graphs of geometric figures with a given clique
  number.
\newblock In J\'anos Pach, editor, {\em Towards a Theory of Geometric Graphs},
  volume 342 of {\em Contemporary Mathematics}, pages 127--138. American
  Mathematical Society, Providence, RI, 2004.

\bibitem{krawczyk2017online}
Tomasz Krawczyk and Bartosz Walczak.
\newblock On-line approach to off-line coloring problems on graphs with
  geometric representations.
\newblock {\em Combinatorica}, 37(6):1139--1179, 2017.

\bibitem{lent1997clustering}
Brian Lent, Arun Swami, and Jennifer Widom.
\newblock Clustering association rules.
\newblock In {\em Proceedings 13th International Conference on Data
  Engineering}, pages 220--231. IEEE, 1997.

\bibitem{lewin2002routing}
Liane Lewin-Eytan, Joseph~Seffi Naor, and Ariel Orda.
\newblock Routing and admission control in networks with advance reservations.
\newblock In {\em International Workshop on Approximation Algorithms for
  Combinatorial Optimization}, pages 215--228. Springer, 2002.

\bibitem{pawlik2014triangle}
Arkadiusz Pawlik, Jakub Kozik, Tomasz Krawczyk, Micha{{\l}} Laso\'n, Piotr
  Micek, William~T. Trotter, and Bartosz Walczak.
\newblock Triangle-free intersection graphs of line segments with large
  chromatic number.
\newblock {\em Journal of Combinatorial Theory, Series B}, 105:6--10, 2014.

\bibitem{rok2019outerstring}
Alexandre Rok and Bartosz Walczak.
\newblock Outerstring graphs are {$\chi$}-bounded.
\newblock {\em SIAM Journal on Discrete Mathematics}, 33(4):2181--2199, 2019.

\bibitem{scott?survey}
Alex Scott and Paul Seymour.
\newblock A survey of {$\chi$}-boundedness.
\newblock arXiv:1812.07500.

\bibitem{zuckerman2007linear}
David Zuckerman.
\newblock Linear degree extractors and the inapproximability of max clique and
  chromatic number.
\newblock {\em Theory of Computing}, 3:103--128, 2007.

\end{thebibliography}

\end{document}